\long\def\ca#1\cb{} 
\newcommand{\abs}[2][]{#1| #2 #1|}
\newcommand{\ketbra}[2]{| \hspace{1pt} #1 \rangle \langle #2 \hspace{1pt} |}
\newcommand{\bramatket}[3]{\langle #1 \hspace{1pt} | #2 | \hspace{1pt} #3 \rangle}
\newcommand{\ket}[1]{|#1\rangle}               
\newcommand{\bra}[1]{\langle #1|}              
\newcommand{\dya}[1]{\ket{#1}\!\bra{#1}}
\newcommand{\rank}{\text{rank}}
\newcommand{\Tr}{{\rm Tr}}
\newcommand{\supp}{\text{supp}}
\newcommand{\ave}[1]{\langle #1\rangle}               
\renewcommand{\geq}{\geqslant}
\renewcommand{\leq}{\leqslant}
\renewcommand{\vec}[1]{\boldsymbol{#1}}  
\newcommand{\ad}{^\dagger}
\newcommand*{\id}{\openone}
\newcommand{\sinc}{\text{sinc}}
\newtheorem{theorem}{Theorem}
\newtheorem{definition}{Definition}
\begin{document}
\title{Quantum Inception Score}

\author{Akira Sone}
\email{akira.sone@umb.edu}
\affiliation{Department of Physics, University of Massachusetts, Boston, Massachusetts 02125, USA}
\affiliation{{The NSF AI Institute for Artificial Intelligence and Fundamental Interactions}}

\author{Akira Tanji}
\affiliation{
Department of Applied Physics and Physico-Informatics, Keio University,
Hiyoshi 3-14-1, Kohoku, Yokohama 223-8522, Japan}

\author{Naoki Yamamoto}
\email{yamamoto@appi.keio.ac.jp}
\affiliation{
Department of Applied Physics and Physico-Informatics, Keio University,
Hiyoshi 3-14-1, Kohoku, Yokohama 223-8522, Japan}
\affiliation{Keio Quantum Computing Center, Keio University,
Hiyoshi 3-14-1, Kohoku, Yokohama 223-8522, Japan}

\begin{abstract}
Motivated by the great success of classical generative models in machine learning, enthusiastic exploration of their quantum version has recently started. To depart on this journey, it is important to develop a relevant metric to evaluate the quality of quantum generative models; in the classical case, one such example is the (classical) inception score (cIS). 
In this paper, as a natural extension of cIS, we propose the quantum inception score (qIS) for quantum generators. 
Importantly, qIS relates the quality to the Holevo information of the quantum channel that classifies a given dataset. 
In this context, we show several properties of qIS. 
First, qIS is greater than or equal to the corresponding cIS, which is defined through projection measurements on the system output. 
Second, the difference between qIS and cIS arises from the presence of quantum coherence, as characterized by the resource theory of asymmetry. 
Third, when a set of entangled generators is prepared, there exists a classifying process leading to the further enhancement of qIS. 
Fourth, we harness the quantum fluctuation theorem to characterize the physical limitation of qIS.
Finally, we apply qIS to assess the quality of the one-dimensional spin chain model as a quantum generative model, with the quantum convolutional neural network as a quantum classifier, for the phase classification problem in the quantum many-body physics.
\end{abstract}

\maketitle

\section{Introduction}
\label{sec:intro}

A burgeoning advancement of the artificial intelligence (AI) is a hallmark of the contemporary information age. 
The principal objective of AI research is the creation of machines which exhibit human-like capabilities of performing complicated tasks including learning, analyzing and reasoning. Presently, the prevailing landscape of AI technologies is predominantly underpinned by the machine learning algorithms, such as convolutional neural network and support vector machine, 
which have been embracing various applications~\cite{goodfellow2016deep,aggarwal2018neural,finlayson2019adversarial,biggio2018wild,tanaka2021deep}. 
However, recently, the machine learning has been warned to adhere to a Moore's law-like trend concerning the size of datasets~\cite{sarma2019machine}, namely the curse of the dimensionality. 
Considering that people seek to encode data into larger feature spaces to facilitate pattern discovery, there is a pressing need for an alternative approach to reduce complexity in the context of big data analysis.

Quantum computers hold significant potential to overcome such challenge due to their intrinsically greater information storage and information processing capacities compared to the classical devices~\cite{Nielsen}. 
Therefore, there have been  enthusiastic explorations into quantum-enhanced machine learning~\cite{schutt2020machine,biamonte2017quantum,jerbi2023quantum,torlai2020machine,schuld2021machine,gili2022generalization}. 
In this paper, our primary focus is on the generative modeling~\cite{turhan2018recent,xu2015overview}, which is one of the most successful framework in classical unsupervised learning  \cite{barlow1989unsupervised,dike2018unsupervised}. 
Various quantum generative models have been proposed, and most of which employ the variational approach, e.g., Refs.~\cite{gao2018quantum,gao2022enhancing,tezuka2024generative}. 
In particular we find quantum generative models motivated from the classical architectures \cite{Kingma_Book_Autoencoder,goodfellow2020generative}, such as variational quantum autoencoders (VQAEs)~\cite{Romero17,khoshaman2018quantum,pepper2019experimental,ma2020compression}, the quantum generative adversarial networks (qGANs)~\cite{lloyd2018quantum,dallaire2018quantum,huang2021experimental} and quantum Boltzmann machines~\cite{amin2018quantum,zoufal2021variational,kieferova2017tomography}. Furthermore, quantum generative models have recently found promising applications in physics domains, such as quantum many-body physics~\cite{lian2019machine,zhang2022experimental,sels2021quantum}.

\begin{figure*}[htp!]
\centering
\includegraphics[width=2\columnwidth]{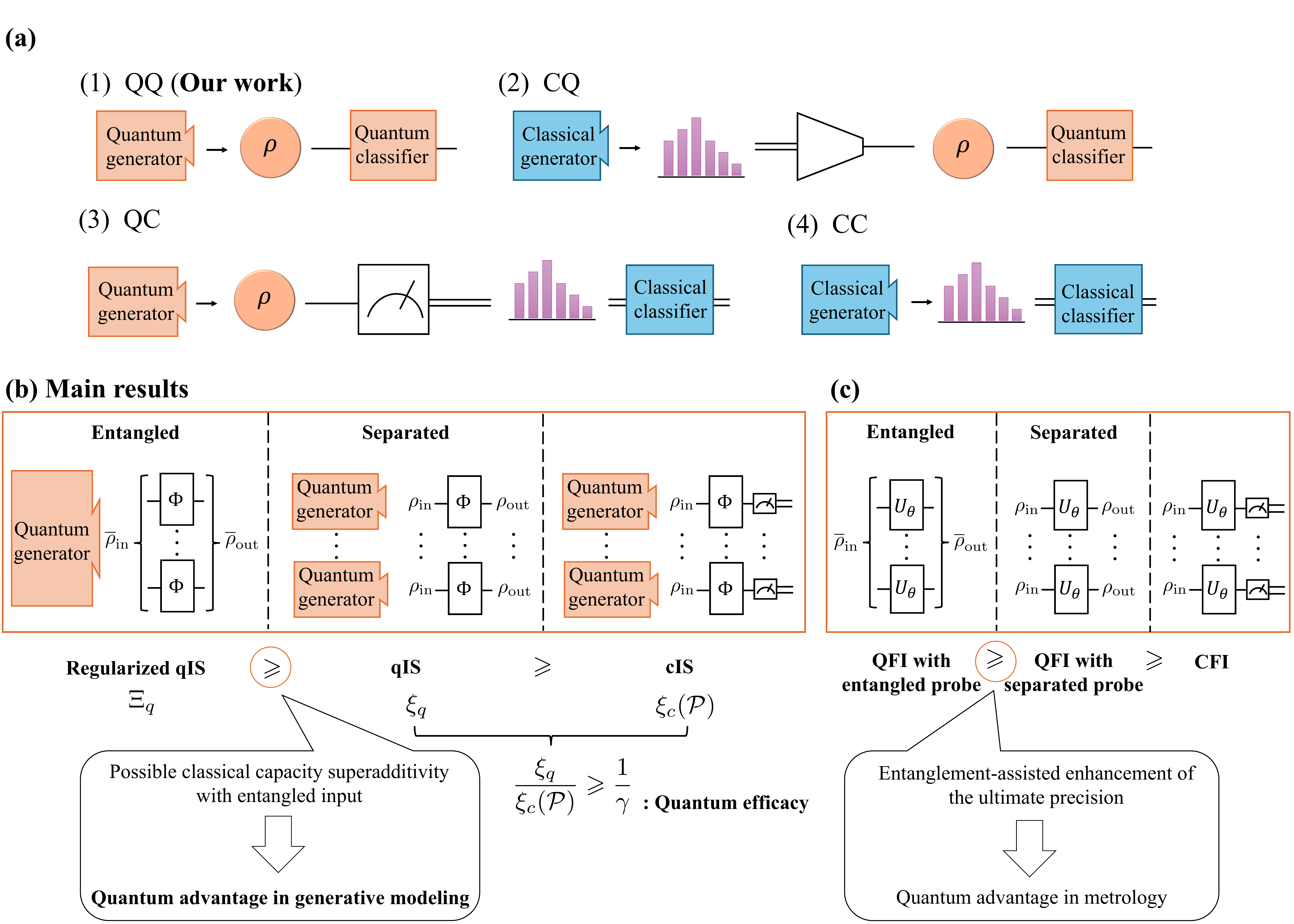}
\caption{
(a) Four types of protocols in the generative modeling. Generative modeling can involve quantumness in both the generator and classifier (QQ protocol), only in the classifier (CQ protocol), only in the generator (QC protocol), or in neither (CC protocol). This paper studies the QQ protocols. (b) Summary of the main results. 
We propose a definition of the qIS (Definition~\ref{def:main1}) and maximally achievable qIS (defined as regularized qIS) by employing the Holevo information and regularized classical capacity, respectively, as a measures of the quality of the quantum generative models. Particularly, when the entangled input $\overline{\rho}_{\rm{in}}$ is allowed, there exists a classifying process $\Phi$ leading to the further quality enhancement. This can be regarded as the quantum advantage in the generative modeling. We also demonstrate that cIS can be recovered via the projection measurements and qIS is larger than or equal to cIS (Theorem~\ref{th:main2}) due to the presence of quantum coherence (Theorem~\ref{th:main3}) characterized by the resource theory of asymmetry.
Furthermore, by employing the quantum fluctuation theorem approach, we illustrate that  {the difference between qIS and cIS decreases} due to pure dephasing in the quantum classifier, which can be characterized by the quantum efficacy (Theorem~\ref{th:main4}).  
(c) An analogous view in the quantum metrology. The ultimate precision limit characterized by the quantum Fisher information (QFI) surpasses the classical limit characterized by the classical Fisher information (CFI). Further enhancement of the QFI can be achieved by the entangled probes, which is the quantum advantage in the metrology.}
\label{fig:summary}
\end{figure*}

To assess the quality of generative models, we need a meaningful and practically computable metric. 
Actually there exists such a metric in the classical regime, which is known as the (classical) {\it inception score} (cIS) \cite{barratt2018note,salimans2016improved}. 
The cIS is a single metric that measures both the diversity of the generated data and the sharpness of each data, the latter of which can be quantified by a classifier. 
In this paper, as a natural extension of cIS, we propose the \textit{quantum inception score} (qIS) for quantum generators. 
Importantly, qIS relates the quality to the Holevo information \cite{holevo1973bounds,schumacher2001optimal}. 
As in the classical case, qIS measures the quality of quantum generative models with the help of quantum classifier. 
Note here that, in analogy to quantum metrology protocols~\cite{giovannetti2006quantum}, the quantum generative modeling problem can involve quantumness in both the generator and classifier (QQ protocol), only in the classifier (CQ protocol), only in the generator (QC protocol), or in neither (CC protocol), as shown in Fig.~\ref{fig:summary}(a). 
The protocol considered in this paper is the QQ protocol. 
With this analogy, our results regarding quantum generative models can also be seen as analogous to the quantum advantage in metrology illustrated in Fig.~\ref{fig:summary}(c), where the ultimate precision limit, characterized by the quantum Fisher information (QFI), surpasses the classical limit characterized by the classical Fisher information (CFI) and is further enhanced by entangled probes. This analogy holds because the protocols considered in quantum metrology in Ref.~\cite{giovannetti2006quantum} are analogous to those in quantum communications~\cite{bennett1998quantum}.

Our main results are Definition~\ref{def:main1} and Theorems~\ref{th:main2}, \ref{th:main3} and \ref{th:main4}, which are summarized in Fig.~\ref{fig:summary}(b). 
First, qIS is greater than or equal to the corresponding cIS, which is defined through projection measurements on the system output; this is analogous to the result in quantum metrology where QFI is bigger than CFI. 
Second, the proposed qIS allows us to demonstrate that the presence of quantum coherence, preserved by the quantum classifier, is the resource  {generating the difference between qIS and cIS.} 
 {Second, the proposed qIS demonstrates that measurements on the quantum classifier can be one factor leading to quality degradation due to the destruction of quantum coherence preserved by the classifier.} 
Third, the entanglement of the generator's output could lead to the further quality enhancement, which is coming from the the potential superadditivity of the communication capacity of the quantum classifier channel~\cite{hastings2009superadditivity}. This can be regarded as the quantum advantage in quantum generative models. 
Forth, leveraging the qIS, we employ the information-theoretic fluctuation theorem~\cite{vedral2012information,sone2023jarzynski,kafri2012holevo} to characterize the physical limitation of the quality of quantum generative models based on the concept of quantum efficacy~\cite{kafri2012holevo,Sagawa10,fujitani2010jarzynski}.
Finally, we provide an example of applying the qIS to assess the quality of the one-dimensional (1D) spin chain model combined with the quantum convolutional neural network (QCNN)~\cite{cong2019quantum,grant2018hierarchical}, for the phase classification problem. 
These results corroborate the significance of exploring the quantum foundation and communication approach to study the quantum machine learning protocols.

We here remark that our study does not involve comparisons between quantum generators and classical generators, nor between quantum classifiers and classical classifiers. 
This is in stark contrast to the study of Gao \textit{et al.} \cite{gao2022enhancing} and Huang \textit{et al.} \cite{huang2022quantum}; the former compares quantum and classical generators, and the latter compares quantum and classical classifiers. 
More precisely, Gao \textit{et al.}~\cite{gao2022enhancing} study the expressivity of quantum and classical generators, to demonstrate that quantum generators have better expressivity than classical ones due to entanglement. 
On the other hand, Huang \textit{et al.}~\cite{huang2022quantum} study quantum and classical classifiers with a fixed quantum generator, and they demonstrate that the quantum classifier performs better if the quantum generator produces entanglement. 
 {Different from these studies, focusing on the quality of the quantum generative models in QQ protocol, we explore the roles of quantum resources such as quantum coherence and entanglement, analyze their physical limitations using a quantum thermodynamics approach, and examine their applications in quantum phase classification.}

This paper is organized as follows. 
In Sec.~\ref{sec:IS}, we briefly review the concept of IS in the classical generative models and introduce the qIS in Sec.~\ref{sec:qIS}. 
Then, we {discuss the relation between} the qIS and cIS in Sec.~\ref{sec:qISvscIS}, and further discuss the role of quantum coherence in Sec.~\ref{sec:coherence}. 
Furthermore, we utilize the quantum fluctuation theorem to characterize the degradation of the quality in Sec.~\ref{sec:limitation}. 
Finally, we show the examples of using the qIS to assess the quantum generative models for the phase classification of the 1D spin-$1/2$ chain in Sec.~\ref{sec:example}, followed by the conclusion in Sec.~\ref{sec:conclusion}.

\section{Inception score}
\label{sec:IS}
Let us first review the concept of IS based on Refs.~\cite{barratt2018note,salimans2016improved}. 
Let $\mathcal{X}\equiv\{x_0,x_1,x_2\cdots,x_{r-1}\}$ be a dataset produced from an unknown probability distribution. 
The aim of generative modeling is to construct a model {$p_{\rm{in}}(x)$} that approximates the unknown probability distribution producing $\mathcal{X}$. 
Given a data $x_i$, we also aim to construct a classifier that produces a relevant label $y_j\in\mathcal{Y}\equiv\{y_0,y_1,y_2,\cdots,y_{\ell-1}\}$, where $\ell$ denotes the number of the labels; we model the classifier via the conditional probability {$q_{\rm{out}}(y|x)$}, which is usually constructed via a neural network. Note here that, in the generative models, we always have $\ell\leq r$. 
This is because we usually assume that the number of labels characterizing the patterns of the encoded dataset is smaller than that of the number of the input data. 
The marginal probability for the label data, $p_{\rm{out}}(y_j)$, is given by $p_{\rm{out}}(y_j)=\sum_{i=0}^{r-1}q_{\rm{out}}(y_j|x_i)p_{\rm{in}}(x_i)$. 
Therefore, the \textit{input} and \textit{output} of the classifier are  {$p_{\rm{in}}(x_i)$} and {$p_{\rm{out}}(y_j)$}, respectively.

Then, the IS of  $p_{\rm{in}}(x)$ relative to  $q_{\rm{out}}(y|x)$ is defined by
\begin{align}
{\xi\equiv\exp\left(\sum_{i=0}^{r-1}p_{\rm{in}}(x_i)D(q_{\rm{out}}(y|x_i)\,\|\,p_{\rm{out}}(y))\right),}
\label{eq:IS}
\end{align}
where 
\begin{align}
    {D(q_{\rm{out}}(y|x_i)\,\|\,p_{\rm{out}}(y))\equiv\sum_{j=0}^{\ell-1}q_{\rm{out}}(y_j|x_i)\ln\frac{q_{\rm{out}}(y_j|x_i)}{p_{\rm{out}}(y_j)}}
\label{eq:KL}
\end{align}
is the Kullback–Leibler (KL) divergence of the conditional label probability $q_{\rm{out}}(y_j|x_i)$ with respect to  $p_{\rm{out}}(y_j)$. Note that $\ln\xi$ can be expressed as  {$\ln\xi = \sum_i p_{\rm{in}}(x_i) \sum_j q_{\rm{out}}(y_j|x_i)\ln q_{\rm{out}}(y_j|x_i) -\sum_j p_{\rm{out}}(y_j) \ln p_{\rm{out}}(y_j)$}, that is, the sum of the expected negative Shannon entropy of  {$q_{\rm{out}}(y|x)$} and the entropy of  $p_{\rm{out}}(y)$. 
The former quantifies the accuracy of the label assigned to the data and the latter does the diversity of the data, implying that a generative model with bigger IS may cast as a high-quality data generator. 
Therefore, in this scenario, the primary objective is to construct generative models, with the help of relevant design of the classifier, that achieves the higher IS; for this reason, IS has been often used for GAN.

\section{Quantum inception score}
\label{sec:qIS}

We consider a general setup of quantum generative model with a system described by $d$-dimensional Hilbert space $\mathcal{H}_S$. 
Let $\mathcal{B}(\mathcal{H}_S)$ denote the set of density operators acting on $\mathcal{H}_S$. 
Suppose that the generator encodes a classical data (or a latent variable) $x_i\in\mathcal{X}$ to  a quantum state $\rho_{\rm{in}}(x_i)\in\mathcal{B}(\mathcal{H}_S)$, for $i=0,1,2,\cdots,(r-1)$. Here, note that $\{\rho_{\rm{in}}(x_i)\}_{i=0}^{r-1}$ are the input states of the quantum classifier. 

As in the usual machine learning scenario that tries to solve problems by encoding data into a larger dimensional space, we assume $d\geq r$. 
The encoded quantum states are then processed by a completely positive and trace-preserving (CPTP) map $\Phi:\mathcal{B}(\mathcal{H}_S)\to\mathcal{B}(\mathcal{H}_S')$, where $\mathcal{H}_S'$ denotes the $d'$-dimensional Hilbert space of the output system. 
Thus the output state $\rho_{\rm{out}}\in\mathcal{B}(\mathcal{H}_S')$ is related to the input as $\rho_{\rm{out}}=\Phi(\rho_{\text{in}})$.

Now, let us write the input state $\rho_{\text{in}}$ as an ensemble of $\rho_{\text{in}}(x_i)\in\mathcal{B}(\mathcal{H}_S)$ encoding the classical input data $x_i\in\mathcal{X}$ sampled from a probability distribution  $p_{\rm{in}}(x_i)$, i.e.,
\begin{align}
\rho_{\text{in}}=\sum_{i=0}^{r-1}{p_{\rm{in}}(x_i)}\rho_{\text{in}}(x_i).
\label{eq:input}
\end{align}
Note that in general $\rho_{\text{in}}(x_i)$ does not necessarily commute with $\rho_{\text{in}}(x_{i'})$ when $i\neq i'$. Then, the output state becomes 
\begin{align}
    \rho_{\rm{out}}=\sum_{i=0}^{r-1}{p_{\rm{in}}(x_i)}\rho_{\rm{out}}^{(i)}\,,
\label{eq:OutputEnsemble}
\end{align}
where
\begin{align}
\rho_{\rm{out}}^{(i)}\equiv\Phi(\rho_{\text{in}}(x_i)).
\label{eq:EachOutput}
\end{align}

Let us write the ensemble of the output state as 
\begin{align}
\mathcal{E}\equiv\left\{ {p_{\rm{in}}(x_i)},\rho_{\rm{out}}^{(i)}\right\}_{i=0}^{r-1}\,.
\end{align}
Because the supports of $\rho_{\rm{out}}^{(i)}$ and $\rho_{\rm{out}}$ satisfy  $\supp(\rho_{\rm{out}}^{(i)})\subseteq\supp(\rho_{\rm{out}})$, the quantum relative entropy $S(\rho_{\rm{out}}^{(i)}\,\|\,\rho_{\rm{out}})\equiv \Tr[\rho_{\rm{out}}^{(i)}\ln\rho_{\rm{out}}^{(i)}]-\Tr[\rho_{\rm{out}}^{(i)}\ln\rho_{\rm{out}}]$ takes a finite value. 
Then the Holevo information $\chi(\mathcal{E};\rho_{\rm{out}})$ of the output ensemble is defined as~\cite{holevo1973bounds,schumacher2001optimal}
\begin{align}
\chi(\mathcal{E};\rho_{\rm{out}})&\equiv \sum_{i=0}^{r-1} {p_{\rm{in}}(x_i)}S(\rho_{\rm{out}}^{(i)}\,\|\,\rho_{\rm{out}})\,.
\label{eq:Holevo}
\end{align}
The Holevo-Schumacher-Westmoreland (HSW) capacity (or Holevo capacity)~\cite{hausladen1996classical,schumacher1997sending,holevo1998capacity,king2001capacity,holevo1999capacity,siudzinska2020classical,lee2015classical} is defined as the maximization of the Holevo information over all the possible  {$p_{\rm{in}}(x_i)$} and $\rho_{\text{in}}(x_i)$:
\begin{align}
    \chi_{\max}(\Phi)\equiv\max_{\{{p_{\rm{in}}(x_i)},\rho_{\text{in}}(x_i)\}} \chi(\mathcal{E};\rho_{\rm{out}})\,.
\label{eq:HolevoCapacity}
\end{align}
In the most generic scenario, the maximal transmittable amount of classical information through the quantum channel $\Phi$ is measured by the regularized classical capacity $C(\Phi)$~\cite{holevo1999capacity,siudzinska2020classical,lee2015classical,gyongyosi2018survey,hastings2009superadditivity} 
\begin{align}
    C(\Phi) \equiv \lim_{n\to \infty}\frac{1}{n}\chi_{\max}(\Phi^{\otimes n})\,,
\label{eq:RegularizedClassicalCapacity}
\end{align}
which is asymptotically achievable with infinite  {copies} of $\Phi$ {with joint measurement}, and we always have ~\cite{siudzinska2020classical,lee2015classical,gyongyosi2018survey,hastings2009superadditivity}
\begin{align}
    C(\Phi) \geq \chi_{\max}(\Phi)\,.
\label{eq:ClassicalCapacityInequality}
\end{align}
From these facts, as our \textit{first main result}, we define the qIS as follows.
\begin{definition}[Quantum Inception Score]
\label{def:main1}
The quantum inception score (qIS) is defined as
\begin{align}
    \xi_q\equiv \exp\left(\chi(\mathcal{E};\rho_{\rm{out}})\right)\,.
\label{eq:QIS}
\end{align}
{Additionally, the achievable maximum qIS is defined as 
\begin{align}
\Xi_q \equiv \exp(C(\Phi))\,,
\label{def:regularizedIS}
\end{align}
which we refer to as the regularized qIS. 
}
\end{definition}

Suppose now that the task of quantum generator is to generate the optimal set of $\{{p_{\rm{in}}(x_i)},\rho_{\text{in}}(x_i)\}$ to maximize $\xi_q$. 
Then, because the asymptotically achievable maximal qIS is given by $\exp(C(\Phi))$, from Eqs.~\eqref{eq:RegularizedClassicalCapacity} and \eqref{eq:ClassicalCapacityInequality}, $\exp(C(\Phi))$ is regarded as an indicator of the best achievable quality by the quantum generator. 
Adopting an example of the image recognition, these results are consistent to the intuition that having a greater volume of information enables the generator to achieve better balanced accuracy and diversity. Particularly, in the asymptotic setup, 
when {entangled} input $\overline{\rho}_{\rm in}\in\mathcal{B}(\mathcal{H}_S^{\otimes n})$ is allowed, there exists a CPTP map $\Phi$ leading to the superadditivity of the classical capacity $C(\Phi)>\chi_{\max}(\Phi)$ ~\cite{hastings2009superadditivity}. 
From Eq.~\eqref{def:regularizedIS}, this results in the quantum advantage in the generative modeling achieved by using the entanglement as a resource.

\section{{Relation between the qIS and cIS}}
\label{sec:qISvscIS}

Here, we explore the relation between the qIS and cIS.  
By introducing projective measurements, we can recover the cIS as follow. 
To classify the data into $\ell$ labels, we assume $\ell=d'=\dim(\mathcal{H}_S')$. Let $\{\ket{y_j}\}_{j=0}^{\ell-1}$ be an orthonormal basis for $\mathcal{H}_S'$. Then, the rank-1 projective measurement onto the orthogonal state $\ket{y_j}$ is given by $\Pi_j\equiv\dya{y_j}$. Then, we have $\Pi_j\Pi_k=\Pi_j\delta_{jk}$ with $\delta_{jk}$ the Kronecker's $\delta$. Also, the completeness relation $\sum_{j=0}^{\ell-1}\Pi_j=\id$ holds, where $\id$ denotes the identity operator acting on $\mathcal{H}_S'$.  
Regarding the dimensions of the Hilbert spaces, hence, the meaningful setup is $d\geq r\geq\ell$. Therefore, $\Phi$ belongs to the class of tree-like quantum classifiers composed of the hierarchical quantum circuits~\cite{grant2018hierarchical,cong2019quantum,li2022recent,hur2022quantum,wrobel2021application,chen2023quantum}, where the compressed states are expected to carry the features of the encoded data $\mathcal{X}$.

For a state $\rho\in\mathcal{B}(\mathcal{H}_S')$, let us write the post-projection-measurement state as
\begin{align}
    \mathcal{P}(\rho)\equiv\sum_{j=0}^{\ell-1}\Pi_j\rho\Pi_j=\sum_{j=0}^{\ell-1}\bramatket{y_j}{\rho}{y_j}\dya{y_j}\,,
\end{align} 
which is also a dephasing map generating an incoherent state diagonal in the basis $\{\ket{y_j}\}_{j=0}^{\ell-1}$. 
Then, the probability that the output state takes $\ket{y_j}$ is given by $p_{\rm{out}}(y_j)\equiv\Tr[\rho_{\rm{out}}\Pi_j]=\bramatket{y_j}{\rho_{\rm{out}}}{y_j}$. 
From Eqs.~\eqref{eq:input}, \eqref{eq:OutputEnsemble}, and \eqref{eq:EachOutput}, we have ${q_{\rm{out}}(y_j|x_i)}\equiv\Tr[\rho_{\rm{out}}^{(i)}\Pi_j]=\bramatket{y_j}{\rho_{\rm{out}}^{(i)}}{y_j}$. 
Because $\mathcal{P}(\rho_{\rm{out}}^{(i)})$ and $\mathcal{P}(\rho_{\rm{out}})$ have the identical eigenbasis $\{\ket{y_j}\}_{j=0}^{\ell-1}$, we have $S(\mathcal{P}(\rho_{\rm{out}}^{(i)})\,\|\,\mathcal{P}(\rho_{\rm{out}}))={D(q_{\rm{out}}(y|x_i)\,\|\,p_{\rm{out}}(y))}$, meaning that the cIS in Eq.~\eqref{eq:IS} can be recovered from Eq.~\eqref{eq:QIS} by projective measurements. Therefore,  {when we write} the Holevo information of the projected output ensemble due to $\mathcal{P}$ as 
\begin{align}
    \chi_{\mathcal{P}}(\mathcal{E};\rho_{\rm{out}})\equiv \sum_{i=0}^{r-1}{p_{\rm{in}}(x_i)}S(\mathcal{P}(\rho_{\rm{out}}^{(i)})\,\|\,\mathcal{P}(\rho_{\rm{out}}))\,,
\label{eq:ProjectHolevo}
\end{align}
the cIS dependent on the choice of $\mathcal{P}$ can be written as
\begin{align}
\xi_c(\mathcal{P})\equiv \exp\left(\chi_{\mathcal{P}}(\mathcal{E};\rho_{\rm{out}})\right)\,.
\label{eq:CIS}
\end{align}

Extending to the positive operator-valued measures (POVMs) $\mathcal{M}\equiv\{E_j\}_{j=0}^{\ell-1}$, where $E_j\geq0$ is a POVM element satisfying $\sum_{j=0}^{\ell-1}E_j=\id$. Also, note that with $j$th POVM element, the probabilities are ${p_{\rm{out}}(y_j)}\equiv\Tr[\rho_{\rm{out}}E_j]$ and ${q_{\rm{out}}(y_j|x_i)}\equiv\Tr[\rho_{\rm{out}}^{(i)}E_j]$. The maximum cIS can be given by using the \textit{accessible information}~\cite{holevo1973bounds,jozsa1994lower,davies1978information,WatrousBook18,vedral2002role}
\begin{align}
\begin{split}
I_{\rm{acc}}&(\mathcal{E};\rho_{\rm{out}})\\
&\equiv \max_{\{\mathcal{M}\}}\Bigg[\sum_{j=0}^{\ell-1}\sum_{i=0}^{r-1}p_{\rm{in}}(x_i)\Tr[\rho_{\rm{out}}^{(i)}E_j]\ln \Tr[\rho_{\rm{out}}^{(i)}E_j]\\
&\quad\quad\quad\quad-\sum_{j=0}^{\ell-1}\Tr[\rho_{\rm{out}}E_j]\ln\Tr[\rho_{\rm{out}}E_j]\Bigg]\\
&=\max_{\{\mathcal{M}\}}\left[\sum_{i=0}^{r-1}p_{\rm{in}}(x_i)D_{\mathcal{M}}(q_{\rm{out}}(y|x_i)\,\|\,p_{\rm{out}}(y))\right]\,,
\end{split}
\label{eq:accessible}
\end{align}
which is the maximization of the classical mutual information over all possible POVMs $\{\mathcal{M}\}$. Here, we intentionally write $D_{\mathcal{M}}$ to emphasize the dependence of the KL divergence on the choice of the POVMs. The Holevo theorem states~\cite{holevo1973bounds} 
\begin{align}
    \chi(\mathcal{E};\rho_{\rm{out}})\geq I_{\rm{acc}}(\mathcal{E};\rho_{\rm{out}})
\end{align}
with the equality if and only if 
\begin{align}
[\rho_{\rm{out}}^{(n)},\rho_{\rm{out}}^{(m)}]=0~(\forall n,\forall m)\,,
\end{align}
implying that $\rho_{\rm{out}}^{(n)}$ and $\rho_{\rm{out}}^{(m)}$ can be simultaneously diagonalized. 

By defining the \textit{accessible IS} as 
\begin{align}
    \xi_{\rm{acc}} \equiv \exp(I_{\rm{acc}}(\mathcal{E};\rho_{\rm{out}}))\,,
\end{align}
we can obtain our second main result as follows
\begin{theorem}
\label{th:main2}
The inception scores, $\xi_q$, $\xi_{\rm{acc}}$, and $\xi_c(\mathcal{P})$, satisfy
\begin{align}
\ell\geq\xi_q\geq\xi_{\rm{acc}}\geq
\xi_c(\mathcal{P})~(\forall\mathcal{P})\,.
\label{eq:QIS>CIS}
\end{align}
\end{theorem}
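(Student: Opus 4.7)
The plan is to establish the three links of the chain $\ell\geq\xi_q\geq\xi_{\rm{acc}}\geq\xi_c(\mathcal{P})$ separately, each via an inequality between the underlying information quantities inside the exponentials; monotonicity of $\exp$ then delivers the stated chain of inception scores. Two of the three links are essentially already quoted in the excerpt, so the work is mostly bookkeeping.

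For the middle inequality $\xi_q\geq\xi_{\rm{acc}}$, I would directly invoke the Holevo theorem stated above Eq.~\eqref{eq:accessible}, namely $\chi(\mathcal{E};\rho_{\rm{out}})\geq I_{\rm{acc}}(\mathcal{E};\rho_{\rm{out}})$, and exponentiate both sides.

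For $\xi_{\rm{acc}}\geq\xi_c(\mathcal{P})$, the strategy is to observe that the rank-one projective measurement $\mathcal{P}=\{\Pi_j=\dya{y_j}\}_{j=0}^{\ell-1}$ is itself a valid POVM. Evaluating the bracket in Eq.~\eqref{eq:accessible} at $E_j=\Pi_j$ reproduces $\sum_i p_{\rm{in}}(x_i)\sum_j q_{\rm{out}}(y_j|x_i)\ln q_{\rm{out}}(y_j|x_i)-\sum_j p_{\rm{out}}(y_j)\ln p_{\rm{out}}(y_j)$, using $q_{\rm{out}}(y_j|x_i)=\Tr[\rho_{\rm{out}}^{(i)}\Pi_j]$ and $p_{\rm{out}}(y_j)=\Tr[\rho_{\rm{out}}\Pi_j]$. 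Since $\mathcal{P}(\rho_{\rm{out}}^{(i)})$ and $\mathcal{P}(\rho_{\rm{out}})$ are simultaneously diagonal in $\{\ket{y_j}\}$, the quantum relative entropy collapses to the classical KL divergence, i.e.\ $S(\mathcal{P}(\rho_{\rm{out}}^{(i)})\,\|\,\mathcal{P}(\rho_{\rm{out}}))=D(q_{\rm{out}}(y|x_i)\,\|\,p_{\rm{out}}(y))$, so this bracket coincides with $\chi_{\mathcal{P}}(\mathcal{E};\rho_{\rm{out}})$ of Eq.~\eqref{eq:ProjectHolevo}. The maximization over POVMs defining $I_{\rm{acc}}$ can only raise the value, so $I_{\rm{acc}}(\mathcal{E};\rho_{\rm{out}})\geq\chi_{\mathcal{P}}(\mathcal{E};\rho_{\rm{out}})$ for every projective $\mathcal{P}$.

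For the upper bound $\xi_q\leq\ell$, I would rewrite the Holevo information in the standard form $\chi(\mathcal{E};\rho_{\rm{out}})=S(\rho_{\rm{out}})-\sum_{i=0}^{r-1}p_{\rm{in}}(x_i)\,S(\rho_{\rm{out}}^{(i)})$, then use nonnegativity of the von Neumann entropy on the subtracted terms and the dimension bound $S(\rho_{\rm{out}})\leq\ln\dim(\mathcal{H}_S')=\ln\ell$, which is tight by the setup $\ell=d'$ introduced in Sec.~\ref{sec:qISvscIS}. Exponentiating gives $\xi_q\leq\ell$. The whole argument has no conceptual obstacle; the one place requiring a moment of care is the identification $S(\mathcal{P}(\rho_{\rm{out}}^{(i)})\,\|\,\mathcal{P}(\rho_{\rm{out}}))=D(q_{\rm{out}}(y|x_i)\,\|\,p_{\rm{out}}(y))$ in the third link, because this is what ensures that the POVM-indexed bracket evaluated at $E_j=\Pi_j$ matches the author's definition of $\chi_{\mathcal{P}}$ verbatim rather than merely up to a reordering of terms.
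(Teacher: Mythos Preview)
Your proposal is correct and follows essentially the same route as the paper: the Holevo bound gives $\xi_q\geq\xi_{\rm{acc}}$, the inclusion of projective measurements among POVMs gives $\xi_{\rm{acc}}\geq\xi_c(\mathcal{P})$, and the standard Holevo-information bound $\chi(\mathcal{E};\rho_{\rm{out}})\leq\ln\ell$ gives the upper cap. Your write-up is slightly more explicit than the paper's---you spell out the entropy-difference form and the identification of the POVM bracket with $\chi_{\mathcal{P}}$---but there is no substantive difference in strategy.
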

\begin{proof}
First, $\xi_q\geq\xi_{\rm{acc}}$ is the Holevo theorem itself. Second, because the projective measurements belong to the POVMs, from Eq.~\eqref{eq:accessible}, we must have $\xi_{\rm{acc}}\geq \xi_c(\mathcal{P})$. {Finally, from $\ln\ell\geq \chi(\mathcal{E};\rho_{\rm{out}})$ and Eq.~\eqref{eq:QIS}, we have $\ell\geq \xi_q$.} Therefore, we can obtain Eq.~\eqref{eq:QIS>CIS}.
\end{proof}
 {Theorem~\ref{th:main2} implies that performing the measurements on the classifier is a primary factor causing the difference between qIS and cIS.}

\section{Quantum coherence and quality}
\label{sec:coherence}

 {From Theorem~\ref{th:main2}, we can intuitively infer that the difference between qIS and cIS arises due to the destruction of quantum coherence by the measurements. Indeed, this inference is correct.} 
In the following, we employ the resource theory of asymmetry (RTA) to provide a rigorous and formal analysis.

\subsection{Review of resource theory of asymmetry}

First, we briefly review the {RTA}~\cite{lostaglio2015quantum,lostaglio2015description, marvian2014asymmetry,marvian2020coherence,marvian2016quantum,takagi2019skew,yamaguchi2023smooth,ahmadi2013wigner,
gour2008resource,marvian2013theory,marvian2014extending,gour2009measuring,marvian2016clocks,vaccaro2008tradeoff,gour2008resource}, where the quantum coherence is regarded as a resource of breaking the group symmetry. Let $G$ be a symmetry group, and $g$ be the group element with its unitary representation $U_g:G\to\mathcal{B}(\mathcal{H})$ acting on a $D$-dimensional Hilbert space $\mathcal{H}$. Let 
\begin{align}
\mathcal{I}_G(\mathcal{H})\equiv\{\sigma\,|\, U_g\sigma U_g\ad = \sigma, \forall g\in G, \sigma\in\mathcal{B}(\mathcal{H})\}
\end{align}
be the set of the free states in the RTA, which are invariant under any unitary operations with respect to $g$. The free state satisfies the commutation relation $[\rho,U_g]=0~(\forall g)$ and is called the \textit{symmetric} state with respect to $G$, and \textit{asymmetric} state otherwise, which becomes a resource state in the RTA. 

A relevant set of free operations are the covariant operations $\Lambda$~\cite{bartlett2003entanglement} with respect to $G$, which satisfies 
\begin{align}
\Lambda(U_g\rho U_g\ad)=U_g\Lambda(\rho)U_g^{\dagger}~~(\forall g\in G,\forall\rho\in\mathcal{B}(\mathcal{H}))\,.
\end{align}
The covariant operation cannot generate the asymmetric states from the symmetric state and transform one asymmetric state to the other. 

To quantify the asymmetry of a given quantum state $\rho$ with respect to $G$, a asymmetry measure $A(\rho;G)$ needs to satisfy the following conditions:
The asymmetry measure must satisfy 
\begin{enumerate}
    \item{$A(\rho;G)\geq 0~(\forall\rho\in\mathcal{B}(\mathcal{H}))$ and $A(\rho;G) = 0 \Longleftrightarrow \rho\in\mathcal{I}_G(\mathcal{H})$. }
    \item{For all covariant operations $\{\Lambda\}$, we must have $A(\rho;G)\geq A(\Lambda(\rho);G)~(\forall \rho\in\mathcal{B}(\mathcal{H}))$.}
\end{enumerate}
One of the asymmetry measures is the \textit{relative entropy of asymmetry}~\cite{gour2009measuring}. To define the relative entropy of asymmetry, let us first introduce the $G$-twirling operation~\cite{gour2008resource,marvian2013theory,marvian2014extending,gour2009measuring,marvian2016clocks,bartlett2003entanglement,vaccaro2008tradeoff}, which is defined as
\begin{align}
\mathcal{G}(\rho)\equiv\int_{G}dg U_g\rho U_g\ad
\end{align}
averaging over the unitary operations with the Haar measure $dg$~\footnote{For a finite group, we have $\mathcal{G}(\rho)\equiv\frac{1}{\abs{G}}\sum_{g\in G}U_g\rho U_g\ad$, where $\abs{G}$ denotes the order of the group.}. When $G$ is a finite or compact Lie group, the relative entropy of asymmetry with respect to $G$ is defined as~\cite{gour2009measuring}
\begin{align}
    A(\rho;G)\equiv S(\rho \,\|\, \mathcal{G}(\rho))\,.
\end{align}

Now, let us consider the case $G=U(1)$ generated by an observable $H$, whose unitary representations form a set of time translations $\{e^{-itH}\,|\,\forall t\in\mathbb{R}\}$. In this case, $\rho$ is a symmetric state if and only if $[\rho,H]=0$ and an asymmetric state if and only if $[\rho,H]\neq 0$ ~\cite{yamaguchi2023smooth}. Therefore, when $\rho$ is a symmetric state, $\rho$ can be diagonalized by the eigenbasis of $H$. Let us write $A(\rho;H)$ as the relative entropy of asymmetry for this case. When $H$ has $L$ distinct eigenvalues, the explicit form of the corresponding $U(1)$-twirling operation with respect to $H$ is given by ~\cite{marvian2016clocks,gour2009measuring} (see the Appendix~\ref{app:twirling} for detailed explanations) 
\begin{align}
\!\!\!\!\mathcal{G}_H(\rho)\!\equiv\!\lim_{T\to\infty}\left[\frac{1}{2T}\int_{-T}^{T}dte^{-iHt}\rho e^{iHt}\right]\!=\!\sum_{n=1}^{L}\Pi_n\rho\Pi_n,
\label{eq:twirling}
\end{align}
where $L\leq D$ and $\{\Pi_n\}_{n=1}^{L}$ are the projectors onto the subspace of the eigenbasis of $H$. Then, the relative entropy of asymmetry $A(\rho,H)\equiv S(\rho\,\|\,\mathcal{G}_H(\rho))$ coincides with the so-called relative entropy of superposition with respect to the orthogonal decomposition of the Hilbert space~\cite{aberg2006quantifying}. Particularly, when $H$ is nondegenerate (i.e.,  $L=D$), we have $\rank(\Pi_n)=1~(\forall n)$, and the relative entropy of asymmetry coincides with the relative entropy of coherence with respect to the eigenabsis of $H$~\cite{baumgratz2014quantifying}.

\subsection{Asymmetry and quantum inception score}
Now, we are ready to discuss the relation between the quantum inception score and the quantum coherence captured by the asymmetry with respect to $U(1)$ group generated by $\rho_{\rm{out}}^{(i)}$ the constituent states of the output state. 

For the output state $\rho_{\rm{out}}=\sum_{i=0}^{r-1} {p_{\rm{in}}(x_i)} \rho_{\rm{out}}^{(i)}$ with its fixed ensemble $\mathcal{E}=\{{p_{\rm{in}}(x_i)},\rho_{\rm{out}}^{(i)}\}_{i=0}^{r-1}$, for each $i$, we consider the set of time translations $\{e^{-i\rho_{\rm{out}}^{(i)}t}\,|\,\forall t\in\mathbb{R}\}$. When $[\rho_{\rm{out}}^{(n)},\rho_{\rm{out}}^{(m)}]=0~(\forall n, \forall m)$, we particularly call the corresponding ensemble as \textit{symmetric ensemble}
\begin{align}
    \!\!\!\!\mathcal{E}_{S}\equiv\left\{{p_{\rm{in}}(x_i)},\rho_{\rm{out}}^{(i)}\,\Big|\,[\rho_{\rm{out}}^{(n)},\rho_{\rm{out}}^{(m)}]=0~(\forall n,\forall m)\right\}_{i=0}^{r-1}\,.
\label{eq:symmetric}
\end{align}
However, we define the \textit{asymmetric ensemble} as 
\begin{align}
    \!\!\!\!\!\overline{\mathcal{E}_{S}}\equiv\left\{{p_{\rm{in}}(x_i)},\rho_{\rm{out}}^{(i)}\,\Big|\,[\rho_{\rm{out}}^{(n)},\rho_{\rm{out}}^{(m)}]\neq0~(\exists n,\exists m)\right\}_{i=0}^{r-1}\,,
\end{align}
which is the complement of $\mathcal{E}_{S}$. To explore the relation between the quantum inception score and the asymmetry measure, we define the average relative entropy of asymmetry of the output state $\rho_{\rm{out}}$ as 
\begin{align}
\ave{A(\rho_{\rm{out}};H)} \equiv \sum_{i=0}^{r-1}{p_{\rm{in}}(x_i)}A(\rho_{\rm{out}}^{(i)};H)\,,
\label{eq:AveCoherence}
\end{align}
which measures the average amount of coherence contained by $\rho_{\rm{out}}$, which is characterized by the asymmetry with respect to the $U(1)$ group generated by $H$. Then, we can obtain our \textit{third main result} as follows
\begin{theorem}
\label{th:main3}
For the output state $\rho_{\rm{out}}=\sum_{i=0}^{r-1}{p_{\rm{in}}(x_i)}\rho_{\rm{out}}^{(i)}$ with its fixed ensemble $\mathcal{E}=\{{p_{\rm{in}}(x_i)},\rho_{\rm{out}}^{(i)}\}_{i=0}^{r-1}$, we have 
\begin{align}  \!\!\!\xi_q>\xi_{\rm{acc}}\Longleftrightarrow \mathcal{E}= \overline{\mathcal{E}_{S}}\Longleftrightarrow \ave{A(\rho_{\rm{out}};\rho_{\rm{out}}^{(k)})}\neq 0~(\exists k)\,.
\label{eq:AsymmetryResource}
\end{align}
\end{theorem}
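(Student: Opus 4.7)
The plan is to prove the two biconditionals separately, since together they give the chain of equivalences in Theorem~\ref{th:main3}. The whole argument leverages the Holevo theorem equality condition stated just before the definition of $\xi_{\rm acc}$, together with the basic fact from the RTA review that for $G = U(1)$ generated by an observable $H$, a state $\rho$ is symmetric (i.e.\ $A(\rho;H)=0$) if and only if $[\rho,H]=0$.

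For the first equivalence, $\xi_q>\xi_{\rm acc}\iff \mathcal{E}=\overline{\mathcal{E}_S}$, I would rewrite both sides in terms of the underlying information quantities. By Definition~\ref{def:main1} and the definition of $\xi_{\rm acc}$, $\xi_q>\xi_{\rm acc}$ is equivalent to $\chi(\mathcal{E};\rho_{\rm out})> I_{\rm acc}(\mathcal{E};\rho_{\rm out})$ because $\exp(\cdot)$ is strictly monotonic. The Holevo theorem recalled in the excerpt gives $\chi\geq I_{\rm acc}$, with equality precisely when $[\rho_{\rm out}^{(n)},\rho_{\rm out}^{(m)}]=0$ for all $n,m$, i.e.\ when $\mathcal{E}=\mathcal{E}_S$. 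Negating yields strict inequality if and only if there exist $n,m$ with $[\rho_{\rm out}^{(n)},\rho_{\rm out}^{(m)}]\neq 0$, which is exactly the defining condition of $\overline{\mathcal{E}_S}$.

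For the second equivalence, $\mathcal{E}=\overline{\mathcal{E}_S}\iff \ave{A(\rho_{\rm out};\rho_{\rm out}^{(k)})}\neq 0$ for some $k$, I would unpack the averaged asymmetry via Eq.~\eqref{eq:AveCoherence}:
\begin{equation*}
\ave{A(\rho_{\rm out};\rho_{\rm out}^{(k)})}=\sum_{i=0}^{r-1} p_{\rm in}(x_i)\, A(\rho_{\rm out}^{(i)};\rho_{\rm out}^{(k)}).
\end{equation*}
Each summand is non-negative (property~1 of the asymmetry measure recalled in the RTA subsection), so the sum vanishes if and only if every term vanishes on the support of $p_{\rm in}$. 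Applying $A(\rho;H)=0\iff[\rho,H]=0$ to $H=\rho_{\rm out}^{(k)}$ then gives $\ave{A(\rho_{\rm out};\rho_{\rm out}^{(k)})}=0$ for every $k$ iff $[\rho_{\rm out}^{(i)},\rho_{\rm out}^{(k)}]=0$ for all $i,k$, i.e.\ iff $\mathcal{E}=\mathcal{E}_S$. Contrapositively, the average is nonzero for some $k$ iff $\mathcal{E}=\overline{\mathcal{E}_S}$.

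Chaining the two equivalences completes the proof. I do not expect any genuine obstacle here: the content of the theorem is really that the Holevo-theorem commutation criterion for $\chi=I_{\rm acc}$ coincides with the RTA commutation criterion for vanishing relative entropy of asymmetry. The only mild care needed is (i) noting strictness of $\exp$ so that the inequality $\xi_q>\xi_{\rm acc}$ truly translates to $\chi>I_{\rm acc}$, and (ii) using non-negativity of each term in the average so that the existential quantifier over $k$ is equivalent to the existential quantifier over the pair $(i,k)$ of non-commuting constituents.
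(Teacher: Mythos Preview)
Your proposal is correct and follows essentially the same route as the paper: both arguments reduce the first equivalence to the Holevo-theorem equality condition and the second to the RTA criterion $A(\rho;H)=0\iff[\rho,H]=0$ combined with non-negativity of each summand, then pass to the stated strict-inequality form by contraposition. The paper presents the contrapositive statement $\xi_q=\xi_{\rm acc}\Leftrightarrow\mathcal{E}=\mathcal{E}_S\Leftrightarrow\ave{A}=0~(\forall k)$ first and negates at the end, while you negate equivalence-by-equivalence, but the logical content is identical.
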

\begin{proof}
We prove by taking the contraposition of the following statement
\begin{align}  \!\!\!\xi_q=\xi_{\rm{acc}}\Longleftrightarrow \mathcal{E}= \mathcal{E}_{S}\Longleftrightarrow \ave{A(\rho_{\rm{out}};\rho_{\rm{out}}^{(k)})}= 0~(\forall k)\,.
\label{eq:NoAsymmetry}
\end{align}
For the first part,  $\xi_q=\xi_{\rm{acc}}\Longleftrightarrow \mathcal{E}=\mathcal{E}_{S}$ is the Holevo theorem. 

For the second part, We consider a set of time translations 
$\{e^{-it \rho_{\rm{out}}^{(k)}}\,|\,\forall t\in\mathbb{R}\}$ generated by the density operator of the constituent state $\rho_{\rm{out}}^{(k)}$.

Let us prove the sufficiency. 
When $\mathcal{E}=\mathcal{E}_{S}$, we have $[\rho_{\rm{out}}^{(i)},\rho_{\rm{out}}^{(k)}]=0~(\forall i, \forall k)$. Therefore, the relative entropy of asymmetry of $\rho_{\rm{out}}^{(i)}$ with respect to $\rho_{\rm{out}}^{(k)}$ must vanish, i.e.,  $A(\rho_{\rm{out}}^{(i)};\rho_{\rm{out}}^{(k)})=0~(\forall i,\forall k)$, because of the condition of the asymmetry measure. Therefore, from Eq.~\eqref{eq:AveCoherence}, we have $\ave{A(\rho_{\rm{out}};\rho_{\rm{out}}^{(k)})}=0~(\forall k)$. Therefore, the sufficiency holds
\begin{align}
\mathcal{E}=\mathcal{E}_{S}\Longrightarrow   \ave{A(\rho_{\rm{out}};\rho_{\rm{out}}^{(k)})}=0~(\forall k)\,.
 \label{eq:suff}
\end{align}

Next, let us prove the necessity. Let us focus on Eq.~\eqref{eq:AveCoherence}. Here, we have ${p_{\rm{in}}(x_i)}>0~(\forall i)$ and the nonnegativity of the quantum relative entropy $S(\rho_{\rm{out}}^{(i)}\,\|\,\mathcal{G}_k(\rho_{\rm{out}}^{(i)}))\geq0~(\forall i,\forall k)$, where $\mathcal{G}_k$ denotes the $U(1)$-twirling operation with respect to $\rho_{\rm{out}}^{(k)}$. Therefore, when we have $\ave{A(\rho_{\rm{out}};\rho_{\rm{out}}^{(k)})}=0~(\forall k)$, we must have $S(\rho_{\rm{out}}^{(i)}\,\|\,\mathcal{G}_k(\rho_{\rm{out}}^{(i)}))=0~(\forall i,\forall k)$. From the condition of the asymmetry measure, this implies $[\rho_{\rm{out}}^{(i)},\rho_{\rm{out}}^{(k)}]=0~(\forall i,\forall k)$, namely  $\mathcal{E}=\mathcal{E}_{S}$. Therefore, the necessity holds:
\begin{align}\ave{A(\rho_{\rm{out}};\rho_{\rm{out}}^{(k)})}=0~(\forall k)\Longrightarrow \mathcal{E}=\mathcal{E}_{S}\,.
\label{eq:nece}
\end{align}

From Eqs.~\eqref{eq:suff} and \eqref{eq:nece}, we can obtain Eq.~\eqref{eq:NoAsymmetry}. By taking its contraposition, we obtain Eq.~\eqref{eq:AsymmetryResource}, which proves Theorem~\ref{th:main3}. 
\end{proof}

This theorem demonstrates that the qIS is larger than {the maximum cIS (i.e., the accessible IS)} if and only if $\rho_{\rm{out}}$ has an asymmetry with respect to the $U(1)$ group generated by some constituent states $\rho_{\rm{out}}^{(i)}$. This means that the quantum coherence preserved during the classification process, captured by the asymmetry, is the resource  {generating the difference between qIS and cIS.} 
 {This theorem not only proves that this quality degradation occurs due to the destruction of quantum coherence by measurements, but it also captures the specific characteristics of this quantum coherence. It shows that the coherence destroyed by the optimal POVM, resulting in minimal quality degradation, is characterized by the asymmetry of $\rho_{\rm{out}}$ with respect to the constituent states $\rho_{\rm{out}}^{(i)}$.}
Also, note that this theorem holds for any asymmetry measures, such as skew informations~\cite{marvian2016quantum,takagi2019skew,yamaguchi2023smooth,ahmadi2013wigner}.

 {Here we remark that we are not comparing quantum generative models with and without measurements. 
Yet the IS has been used as a measure for model selection in the classical regime. 
In our case, the idea is as follows. 
Suppose that we have two models $A$ and $B$. 
Then, it makes sense to compare their accessible IS (i.e.,  maximum cIS) $\xi_{\rm{acc}}(A)$ versus $\xi_{\rm{acc}}(B)$ for the purpose of choosing the better model. 
We note that, since a gap can arise between $\xi_q$ and $\xi_{\rm{acc}}$, comparing their qISs $\xi_q(A)$ and $\xi_q(B)$ is not always appropriate, 
but when the gap vanishes, the qIS serves as a fundamental quantum-limited quantity for model selection, and Theorem~\ref{th:main3} clarifies such situation. 
}

\section{ {Impact of decoherence on quality}}
\label{sec:limitation}

Finally, we discuss  {another primary factor causing the quality degradation of the quantum generative models.} 
From Theorem~\ref{th:main3}, we can also expect that the pure dephasing (or decoherence) contributes to degrading the quality.  
Here, we demonstrate that the quality degradation mechanism can be captured by the information-theoretic fluctuation theorems~\cite{vedral2012information,sone2023jarzynski,kafri2012holevo} based on the concept of quantum efficacy~\cite{kafri2012holevo,Sagawa10,fujitani2010jarzynski}. 
Following Ref.~\cite{kafri2012holevo}, when $\Delta a$ is a random variable whose average is $\ave{\Delta a} = \chi(\mathcal{E};\rho_{\rm{out}})-\chi_{\mathcal{P}}(\mathcal{E};\rho_{\rm{out}})$, the corresponding quantum fluctuation theorem is given as $\ave{\exp(-\Delta a)}=\gamma$, where $\gamma$ is called quantum efficacy~\footnote{{For the origin of $\gamma$ regarding the Holevo information, refer to Eqs.~(11)\,--\,(14) in Ref.~\cite{kafri2012holevo}.}} and satisfies $0<\gamma\leq 1$.

From the Jensen's inequality, we have $\chi(\mathcal{E};\rho_{\rm{out}})-\chi_{\mathcal{P}}(\mathcal{E};\rho_{\rm{out}})\geq -\ln\gamma$.
Here, note that $\gamma$ is strictly dependent on $\Phi$ and the choice of the projective measurement $\mathcal{P}$. Then, we can obtain our \textit{fourth main result} as follows.
\begin{theorem}
\label{th:main4}
The quantum inception score can be lower bounded by using quantum efficacy $\gamma$ as
\begin{align}
    \xi_q\geq \frac{\xi_c(\mathcal{P})}{\gamma}~~(0<\gamma\leq1)\,.
\label{eq:Efficacy}
\end{align}
\end{theorem}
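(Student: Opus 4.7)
The plan is to directly exponentiate the Jensen-inequality consequence of the quantum fluctuation theorem that the excerpt has already assembled. The substantive physics is in the setup: the random variable $\Delta a$ is constructed so that simultaneously $\ave{\Delta a}=\chi(\mathcal{E};\rho_{\rm{out}})-\chi_{\mathcal{P}}(\mathcal{E};\rho_{\rm{out}})$ and $\ave{\exp(-\Delta a)}=\gamma$ with $0<\gamma\leq 1$. Given these two properties, the theorem reduces to a two-line algebraic manipulation.

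First I would apply Jensen's inequality to the convex function $e^{-x}$ and use the fluctuation relation to write
\begin{align}
\gamma \;=\; \ave{e^{-\Delta a}} \;\geq\; e^{-\ave{\Delta a}} \;=\; e^{-\bigl(\chi(\mathcal{E};\rho_{\rm{out}})-\chi_{\mathcal{P}}(\mathcal{E};\rho_{\rm{out}})\bigr)}\,,
\end{align}
which rearranges into $\chi(\mathcal{E};\rho_{\rm{out}})-\chi_{\mathcal{P}}(\mathcal{E};\rho_{\rm{out}})\geq -\ln\gamma$, exactly the bound quoted in the text. Next I would exponentiate both sides and invoke Definition~\ref{def:main1} together with the defining equation~\eqref{eq:CIS} for $\xi_c(\mathcal{P})$, obtaining $\exp\bigl(\chi(\mathcal{E};\rho_{\rm{out}})\bigr)\geq \exp\bigl(\chi_{\mathcal{P}}(\mathcal{E};\rho_{\rm{out}})\bigr)/\gamma$, which is the claimed $\xi_q\geq \xi_c(\mathcal{P})/\gamma$.

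As a sanity check I would verify that the bound refines the chain in Theorem~\ref{th:main2}: since $0<\gamma\leq 1$ implies $1/\gamma\geq 1$, we have $\xi_c(\mathcal{P})/\gamma\geq \xi_c(\mathcal{P})$, so Theorem~\ref{th:main4} reduces to the $\xi_q\geq \xi_c(\mathcal{P})$ part of Theorem~\ref{th:main2} in the no-dephasing limit $\gamma\to 1$, while a strict gap $\gamma<1$ quantifies the dephasing-induced quality loss. The main obstacle is therefore not the algebra but ensuring that the $\Delta a$ constructed from the two-point measurement scheme of Ref.~\cite{kafri2012holevo} genuinely enjoys both the correct mean (a Holevo-information gap) and the correct integral fluctuation relation with parameter $\gamma\in(0,1]$; once these are invoked, the rest is a one-line application of Jensen and exponentiation.
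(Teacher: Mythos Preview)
Your proposal is correct and follows essentially the same approach as the paper: start from the Jensen-inequality consequence $\chi(\mathcal{E};\rho_{\rm{out}})-\chi_{\mathcal{P}}(\mathcal{E};\rho_{\rm{out}})\geq -\ln\gamma$ of the fluctuation relation, then exponentiate and invoke the definitions of $\xi_q$ and $\xi_c(\mathcal{P})$. The paper's own proof is even terser---it places the Jensen step in the preamble and the formal proof is just the exponentiation---so your write-up is slightly more self-contained but otherwise identical in substance.
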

\begin{proof}
By definitions, we have $\xi_q\equiv \exp\left(\chi(\mathcal{E};\rho_{\rm{out}})\right)$ and $ \xi_c(\mathcal{P})\equiv \exp\left(\chi_{\mathcal{P}}(\mathcal{E};\rho_{\rm{out}})\right)$. Because $\exp(-\ln\gamma)=\gamma^{-1}$, we can obtain $\xi_q\geq \gamma^{-1}\xi_c(\mathcal{P})$ with 
$0<\gamma\leq1$.
\end{proof}
From Refs.~\cite{kafri2012holevo,ruskai2002inequalities}, when $\gamma=1$, the equality of Eq.~\eqref{eq:Efficacy} holds if and only if all $\rho_{\rm{out}}^{(i)}$ have \textit{identical} eigenbasis and $\mathcal{P}$ is the dephasing map onto this common eigenbasis, i.e.,  $\mathcal{P}(\rho_{\rm{out}}^{(i)})=\rho_{\rm{out}}^{(i)}~~(\forall i)$. Therefore, we have  $\rho_{\rm{out}}=\mathcal{P}(\rho_{\rm{out}})$, which means that $\rho_{\rm{out}}$ is an incoherent state diagonal in this common eigenbasis. This implies that the quality degradation of the quantum generative model is primarily due to the pure dephasing process.
This analysis also enables us to interpret $\gamma$ as a quantity characterizing the \textit{physical limitation} of the quality due to the information loss. In particular, the negative logarithm $(-\ln\gamma)$ accounts for the information content contained by the quantum coherence, which is {preserved during the classification process}.

\section{{Application to a phase classification problem in quantum many-body physics}}
\label{sec:example}

As an example, we harness the quantum inception score to analyze the quality of a fixed quantum generator with a trainable quantum classifier. Here, the generator is the one-dimensional (1D) spin-$1/2$ chain that may experience phase transition, and the classifier is the quantum convolutional neural network (QCNN)~\cite{cong2019quantum}.
{Note that a main goal here is to test the gap between qIS and cIS for a fixed generator, rather than designing a good generator achieving a higher qIS like the situation of GAN.}

\subsection{QCNNs}
The architecture of the QCNNs were introduced in Refs.~\cite{cong2019quantum,grant2018hierarchical}, which are expected to be promising near-term quantum algorithms as quantum classifiers~\cite{li2022recent,hur2022quantum,wrobel2021application,chen2023quantum} because of their absence of barren plateaus~\cite{pesah2021absence} resulting in the trainability of the QCNNs~\cite{uvarov2020barren, mcclean2018barren,cerezo2021barren,wang2020noise, zhang2022quantum}. 
Following Ref.~\cite{pesah2021absence}, let us consider the following QCNN circuit (see Fig.~\ref{fig:QCNN} as an example for the 8-qubit case).

\begin{figure}[htp!]
\centering
\includegraphics[width=1.0\columnwidth]{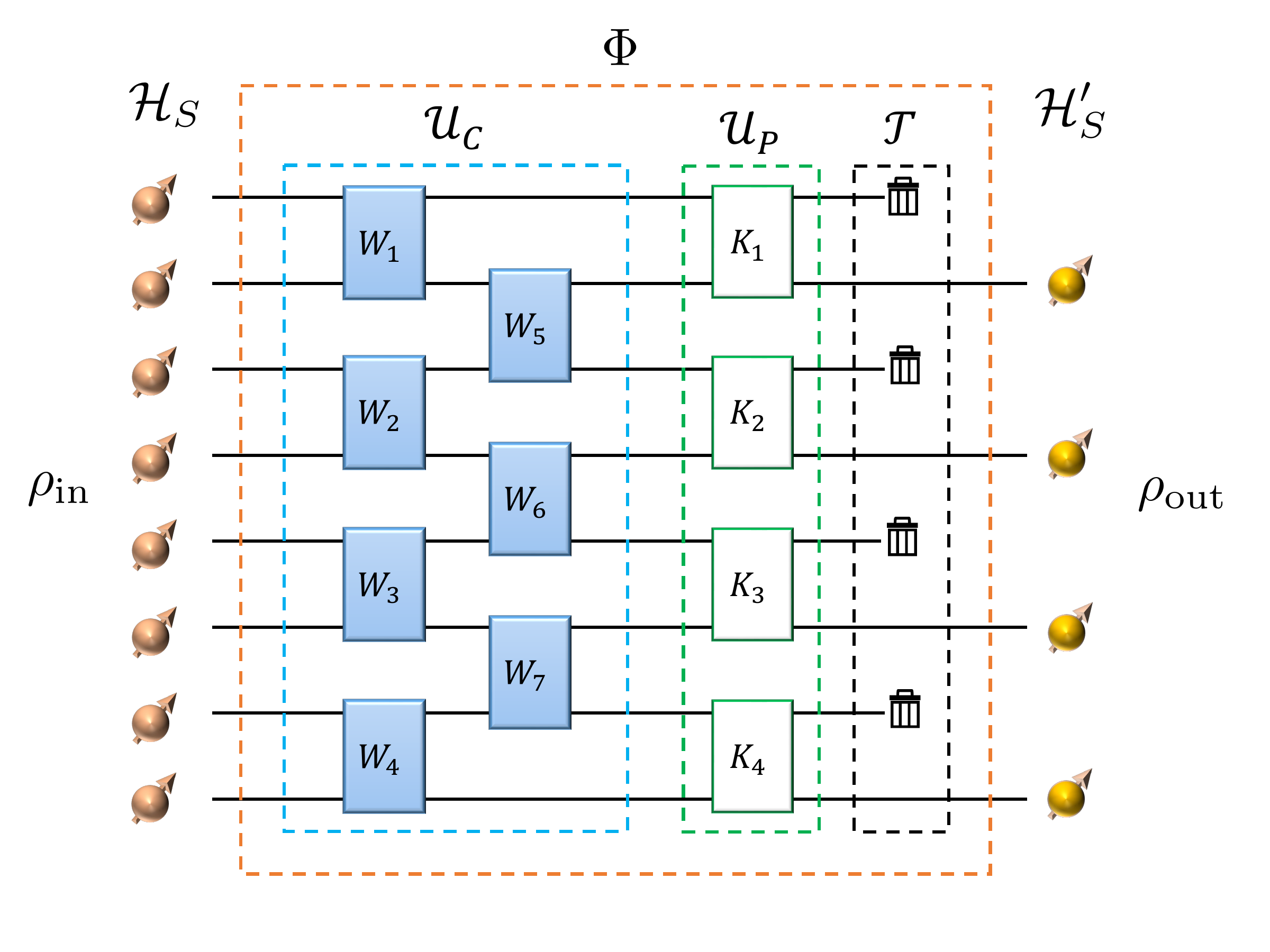}
\caption{The 8-qubit quantum convolutional neural network. $\mathcal{U}_{C}$ and $\mathcal{U}_P$ denote the convolutional and pooling layer. $\{W_1,\cdots, W_7\}$ are the 2-qubit unitaries acting on the pairs of qubits in an alternating manner, which are parameterized by some tunable parameters. 
$\{K_1,\cdots, K_4\}$ are the 2-qubit unitaries with the form of $K_j\equiv\dya{0}\otimes U_j+\dya{1}\otimes V_j$, where $U_j$ and $V_j$ are single-qubit unitaries parameterized by some tunable parameters. $\mathcal{T}\equiv \Tr_{\overline{\mathcal{H}}_S'}$ denotes the partial trace over $\overline{\mathcal{H}}_S'$ (the complement of $\mathcal{H}_S'$). Then, the quantum classifier channel is given by $\Phi\equiv\mathcal{T}\circ\mathcal{U}_P\circ\mathcal{U}_C$.  }
\label{fig:QCNN}
\end{figure}

The input state $\rho_{\text{in}}$ is an $N$-qubit state, which may encode the classical data, namely the output of a generator. 
Then, we send $\rho_{\text{in}}$ into the single layered convolutional circuit denoted by a unitary operation $\mathcal{U}_C$ and then the pooling circuit denoted by another unitary operation $\mathcal{U}_P$. 
The convolutional circuit $\mathcal{U}_C$ consists of two columns of the alternating 2-qubit gate parameterized by some tunable parameters. 
The pooling circuit $\mathcal{U}_P$ consists of the 2-qubit gates with the form of $K_j=\dya{0}\otimes U_j+\dya{1}\otimes V_j$, where $j$ denotes the $j$th pair of the qubits, and $U_j$ and $V_j$ are single-qubit gates parameterized by some tunable parameters. 
Then, we take the partial trace $\mathcal{T}\equiv\Tr_{\overline{\mathcal{H}}_S'}$ over $\overline{\mathcal{H}}_S'$, the complement of the output Hilbert space $\mathcal{H}_S'$, to obtain the output state $\rho_{\rm{out}}$. 
Therefore, the whole process is written as  
\begin{align}
\rho_{\rm{out}}=\Phi(\rho_{\text{in}})\equiv\mathcal{T}\circ\mathcal{U}_P\circ\mathcal{U}_C(\rho_{\text{in}})\,.
\end{align}

\subsection{Setup and problem formulation}

\subsubsection{Generator and quantum dataset}

Adopting the same model studied in Ref.~\cite{cong2019quantum}, we employ the QCNN to classify 2-class and 3-class quantum phases. 
The target quantum state to be classified, $\rho_{\text{in}}(x)=\dya{\psi_0(x)}$, is the ground state of an $N$-body Hamiltonian of a 1D spin-$1/2$ chain with open boundary conditions~\cite{jeyaretnam2021quantum,chen2014symmetry}:
\begin{align}
    H \!=\!-J\sum^{N-2}_{n=1}Z_{n}X_{n+1}Z_{n+2} 
              \!-\!h_{1}\sum^{N}_{n=1}X_{n}
              \!-\!h_{2}\sum^{N-1}_{n=1}X_{n}X_{n+1},
\label{eq:Hamiltonian}
\end{align}
where $\{X_n, Y_n, Z_n\}$ are the Pauli matrices of the $n$th spin. 
Also, $J$, $h_1$, and $h_2$ are the strength of the cluster coupling, the global transverse field, and the nearest-neighboring Ising coupling, respectively. 
These parameters take several values, which accordingly lead to several ground states of $H$. In particular, we collect those parameters into the vector
\begin{align}
x \equiv \begin{pmatrix}h_1/J\\ h_2/J\end{pmatrix}\,,
\end{align}
and we write the corresponding Hamiltonian in Eq.~\eqref{eq:Hamiltonian} as $H(x)$.

Here, we consider the case of $N=9$. 
The parameters $h_1/J$ and $h_2/J$ take equally separated 64 values in the intervals $h_1/J\in[0,1.6]$ and $h_2/J\in[-1.6,1.6]$, respectively; therefore, we consider totally $64\times64=4096$ points, i.e.,  $\{x_i\}_{i=0}^{4095}=\{\vec{z}_{n,m}\}_{(n,m)=(0,0)}^{(63,63)}$, where 
\begin{align}
\vec{z}_{n,m}\equiv 
\begin{pmatrix}
\frac{1.6}{63}n\\
-1.6+\frac{3.2}{63}m
\end{pmatrix}~(n,m=0,1,\cdots,63)\,.
\end{align}
{Later, we will show the phase diagram of the ground states $\{\rho_{\rm in}(x_i)\}_{i=0}^{4095}$.}

Apart from these parameter points, we take 40 ground states $\{\rho_{\rm in}(x_i)\}_{i=0}^{39}$ as the training quantum data, where the parameter vectors $\{x_0,x_1,\cdots,x_{39}\}$ are taken as $h_1/J=1$ and $h_2/J\in[-1.6,1.6]$, namely 
\begin{align}
\label{training data}
    x_i \equiv 
    \begin{pmatrix}
    1\\
    -1.6+\frac{3.2}{39}i
    \end{pmatrix}\,~(i=0,1,\cdots,39)\,.
\end{align}
In our simulation, the ground state $\rho_{\text{in}}(x_i)$ is obtained by diagonalizing $H(x_i)$.

\begin{table}[htp!]
\begin{tabular}{|c||c|c|}
\hline
\multicolumn{3}{|c|}{(a) 2-class phase classification}\\
\hline
Label &  
$h_2/J$& 
Phase\\     
\hline
0 & $-1.15<h_2/J<0$ & SPT \\
\hline
1 & $h_2/J<-1.15$ or $h_2/J>0$  & Other phases\\
\hline
\multicolumn{3}{|c|}{(b) 3-class phase classification} \\
\hline
Label &  
$h_2/J$& 
Phase\\     
\hline
0 & $-1.15<h_2/J<0$ & SPT \\
\hline
1 & N/A & Nothing\\
\hline
2& $h_2/J>0$& Paramagnetic\\
\hline
3& $h_2/J<-1.15$& Antiferromagnetic\\
\hline
\end{tabular}
\caption{The (a) 2-class and (b) 3-class phase of the ground state of the Hamiltonian \eqref{eq:Hamiltonian} when $h_1/J=1$. 
{Note that for the 3-class case, we employ a 4-valued POVM, and the label $``1"$ has the meaning of ``fail in classification.'' }} 
 \label{table:2-3-class}
\end{table}

Table~\ref{table:2-3-class} summarizes the labels (the phase) for the 2-class and 3-class cases when $h_1/J=1$. 
For the 2-class classification problem, we assign the label $``0"$ to the $\mathbb{Z}_2\times\mathbb{Z}_2$ symmetry-protected topological (SPT) phase~\cite{haldane1983nonlinear,gu2009tensor,pollmann2012symmetry} when $-1.15<h_2/J<0$ and $``1"$ to the other phases. 
For the 3-class classification problem, we assign the label $``0"$ to the SPT phase when $-1.15<h_2/J<0$, $``2"$ to the paramagnetic phase when $h_2/J>0$, and $``3"$ to the antiferromagnetic phase when $h_2/J<-1.15$. {Note that, later in the discussion, we introduce a four-valued POVM to classify the phase, where the additional label $``1"$ has the meaning of ``fail in classification''. 
Also, the general phase other than the case $h_1/J=1$ will be shown later.  
}

\subsubsection{Training of QCNN}

The QCNN circuit used for this quantum phase classification problem is shown in Fig.~\ref{fig:QCNN_for_PhaseClassifications}.
\begin{figure}[htp!]
\centering
\includegraphics[width=1\columnwidth]{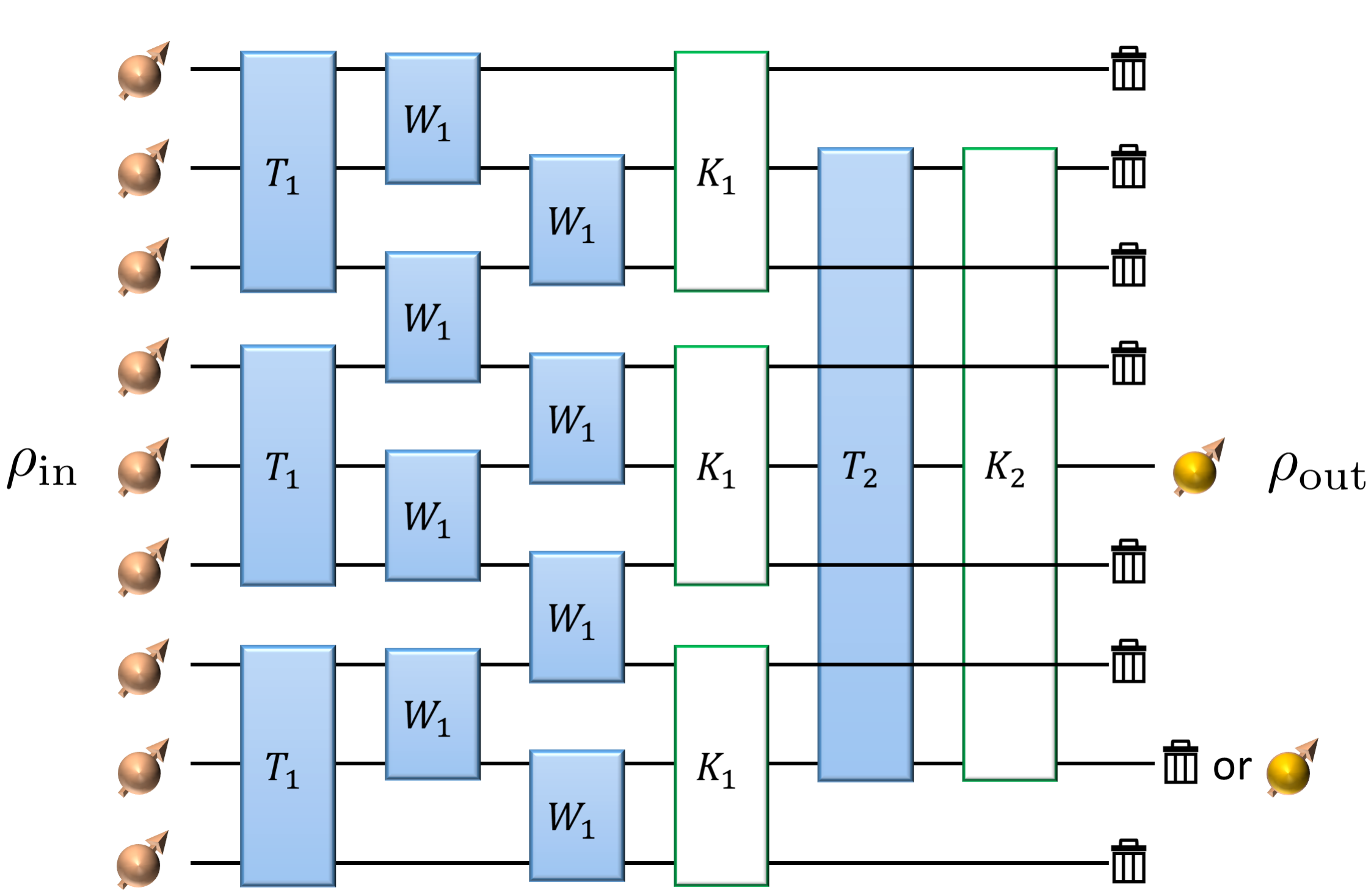}
\caption{The 9-qubit QCNN circuit for the 2-class and 3-class quantum phase classification problems. $T_j$ and $K_j$ denote the 3-qubit gate, and $W_j$ denotes the 2-qubit gate. We take partial trace over all qubits except for the fifth qubit for 2-class and except for both the fifth and eighth qubits for the 3-class classification, respectively.
} 
\label{fig:QCNN_for_PhaseClassifications}
\end{figure}
This circuit consists of $N=9$ qubits, and it contains 117 learning parameters for the 2-class classification and 156 for the 3-class classification problems, respectively. The gates with the same name and index share the same parameters.
The convolutional layer consists of $W_j$ and $T_j$. $W_j$ is the 2-qubit convolutional gate with the form of $W_j = (V\otimes V)e^{-i\vec{\theta}\cdot(ZZ,YY,XX)}(V\otimes V)$, where $\vec{\theta}$ is the 3-dimensional vector and $V$ is the single-qubit gate with the form of $V = e^{-i\phi_1Z}e^{-i\phi_2Y}e^{-i\phi_3X}$ which has 3 parameters. $W_j$ has 15 parameters since each $V$ in $W_j$ has different parameters. $T_j$ is the 3-qubit convolutional gate with the form of  $T_j = W^{(3,1)}W^{(2,3)}W^{(1,2)}$, where $W^{(a,b)}$ acts on qubits indexed by $a$ and $b$. $T_j$ has 45 parameters since each $W$ in $T_j$ has different parameters.
The pooling layer consists of $K_j$. For the 2-class classification, $K_j$ is the 3-qubit pooling gate with the form of $K_j = (\id\otimes\id\otimes\dya{0}+\id\otimes V\otimes\dya{1})(\dya{0}\otimes\id\otimes\id+\dya{1}\otimes V\otimes\id)$ which has 6 parameters. 
For the 3-class classification, $K_1$ is the same as $K_j$ described above, while $K_2$ is the same form as $T_j$.

As for the dimensions of the Hilbert space of the output state, we have $\ell=2$ for the 2-class classification and $\ell=4$ for the 3-class classification problems, respectively. 
{This corresponds to the partial trace operation after applying $K_2$; that is, we take the partial trace over all qubits except for the 5-th qubit for the 2-class classification and except for both 5-th and 8-th qubits for the 3-class classification, respectively. } 
Let us define $\ket{+}$ and $\ket{-}$ as $\ket{\pm} \equiv (\ket{0}\pm\ket{1})/\sqrt{2}$\,,
where $\{\ket{0},\ket{1}\}$ is the computational basis. Then, for the 2-class classification, we define $\Pi_{0}\equiv\dya{+}$ and $\Pi_{1}\equiv\dya{-}$ as the projectors corresponding to the labels $``0"$ and $``1,"$ respectively. For the 3-class classification, we define $\Pi_0\equiv\dya{+,+}$, $\Pi_{1}\equiv\dya{+,-}$, $\Pi_{2}\equiv\dya{-,+}$ and $\Pi_{3}\equiv\dya{-,-}$ as the projectors corresponding to the labels $``0,"$ $``1,"$ $``2"$ and $``3,"$ respectively.

In the QCNN training procedure, we minimize the cost function by updating the parameters of QCNN with the use of the simultaneous perturbation stochastic approximation (SPSA) optimizer~\cite{spall1998overview,spall1999stochastic}. 
Here, the cost function is the cross-entropy loss given by 
\begin{align}
\begin{split}
\!\!\!\mathcal{L}(\Phi)&=-\sum_{i,j}p_{\rm{true}}(x_i,y_j)\ln p_{\rm{train}}(x_i,y_j)\\
&=-\sum_{i,j}p_{\rm{true}}(y_j|x_i) {p_{\rm{in}}(x_i)}\ln (p_{\rm{train}}(y_j|x_i) {p_{\rm{in}}(x_i)})\\
&=-\frac{1}{r}\sum_{i,j}p_{\rm{true}}(y_j|x_i)\ln\frac{p_{\rm{train}}(y_j|x_i)}{r}\\
&=-\frac{1}{r}\sum_{i,j}p_{\rm{true}}(y_j|x_i)\ln p_{\rm{train}}(y_j|x_i)+\ln r,
\end{split}
\label{eq:costfunction}
\end{align}
where
\begin{align}
p_{\rm{train}}(y_j|x_i)\equiv \Tr\left[\Phi(\rho_{\text{in}}(x_i))\Pi_j(x_i)\right], 
\end{align}
with $\Phi$ the QCNN and $\Pi_j(x_i)$ the projector corresponding to the label $y_j$ of the training data $x_i$ encoded in $\rho_{\text{in}}(x_i)$. 
{
Also, $p_{\rm{true}}(y_j|x_i)$ is the true distribution, where $y_j$ is the index of phase assigned to $\rho_{\text{in}}(x_i)$. 
Recall that we are given $40$ training data \eqref{training data}, and we here assume that $\rho_{\rm in}(x_i)$ appears with equal probability ${p_{\rm{in}}(x_i)}=1/40$. 
}

\subsubsection{Prediction and quality evaluation}

{After training the QCNN, we apply the trained QCNN to predict the phase (label) of test data 
\footnote{
Specifically, we measure the output of the trained QCNN with some projectors and assign the label that corresponds to the highest probability of the measurement results. 
}
and then compute the qIS and cIS, for the 2-class and 3-class classifications problems in the following two scenarios. 
}
The first is the unbiased scenario where we randomly generate equal numbers of data for every label, and the second is the biased scenario where the numbers of randomly generated data are (largely) different for each label. 
Clearly, the former has a bigger diversity in the data, or equivalently the generator has a capability to produce a bigger variety of data; thus the values of both qIS and cIS for the former case would be bigger compared to the latter case.

In the simulation, the total number of test data is $r=1500$, where the labels are given as follows. 
For the 2-class classification problem, in the unbiased case we randomly select 750 data with label $``0"$ and 750 data with label $``1"$; in the biased case we randomly select 1480 data with label $``0"$ and 20 data with label $``1."$ 
For the 3-class classification problem, in the unbiased case we randomly select 500 data with label $``0,"$ 500 data with $``2,"$ and 500 data with $``3"$; in the biased case we randomly select 1480 data with label $``0,"$ 10 data with $``2"$ and 10 data with $``3."$ 
Table ~\ref{table:2-3-class-data} shows the summary of the setting described above. 
In all cases, {we assume that the generators are fixed and} each data is generated with equal probability; that is, we suppose 
\begin{align}
{p_{\rm{in}}(x_i)}=\frac{1}{1500}~(\forall i)\,.
\end{align}

\begin{table}[htp!]
\begin{tabular}{|c||c|c|}
\hline
\multicolumn{3}{|c|}{(a) 2-class phase classification} \\
\hline
Label &  
Unbiased& 
Biased\\     
\hline
0 & 750 & 1480 \\
\hline
1 & 750  & 20\\
\hline
\multicolumn{3}{|c|}{(b) 3-class phase classification} \\
\hline
Label &  
Unbiased& 
Biased\\     
\hline
0 & 500 & 1480 \\
\hline
1 & 0 & 0\\
\hline
2& 500 & 10\\
\hline
3& 500& 10\\
\hline
\end{tabular}
\caption{Number of randomly selected test data for the (a) 2-class and (b) 3-class classification problems. 
}
\label{table:2-3-class-data}
\end{table}

With the above setting, we can compute qIS simply using Eqs.~\eqref{eq:Holevo} and \eqref{eq:QIS} with $r=1500$. 
To compute the cIS $\xi_c(\mathcal{P})$, we further need to specify the measurement or equivalently the measurement process $\mathcal{P}$. 
In our simulation, for the 2-class classification problem, we take the following dephasing operations: 
\begin{align}
    \mathcal{P}_{2}(\rho) \equiv\bramatket{+}{\rho}{+}\dya{+}+\bramatket{-}{\rho}{-}\dya{-}. 
\end{align}
Also, for the 3-class classification problems, we take 
\begin{align}
\begin{split}
    \mathcal{P}_3(\rho)&\equiv
    \,~\bramatket{+,+}{\rho}{+,+}\dya{+,+}\\
    &~~+\bramatket{+,-}{\rho}{+,-}\dya{+,-}\\
    &~~+\bramatket{-,+}{\rho}{-,+}\dya{-,+}\\
    &~~+\bramatket{-,-}{\rho}{-,-}\dya{-,-}\,.
\end{split}
\end{align}
{
Furthermore, we consider the problem of calculating the accessible information $I_{\rm acc}$ given by Eq.~\eqref{eq:accessible} for the 2-class classification problem. 
In particular, instead of maximizing $\{\mathcal{M}\}$ over all possible POVMs, we here consider an optimization problem of the projectors in the following form: \begin{align}
\begin{split}
    \Pi_{0}(\theta,\phi)&\equiv\dya{\psi_{0}(\theta,\phi)}, \\
    \Pi_{1}(\theta,\phi)&\equiv\dya{\psi_{1}(\theta,\phi)}\,,
\end{split}
\label{eq:ParameterizedProjectors}
\end{align}
where
\begin{align}
\begin{split}
 {\ket{\psi_{0}(\theta,\phi)}} &\equiv\cos(\theta/2)\ket{0}+e^{i\phi}\sin(\theta/2)\ket{1},\\
 {\ket{\psi_1(\theta,\phi)}}&\equiv\sin(\theta/2)\ket{0}-e^{i\phi}\cos(\theta/2)\ket{1}\,.
\end{split}
\end{align}
That is, we optimize the parameters $(\theta, \phi)$ so that the cIS is maximized. 
}

For the 3-class classification problem, we use the projectors in the following form:
\begin{align}
\begin{split}
    \Pi_{0}(\vec{\theta})&\equiv U(\vec{\theta})\dya{0,0}U^\dagger(\vec{\theta})\,, \\
    \Pi_{1}(\vec{\theta})&\equiv U(\vec{\theta})\dya{0,1}U^\dagger(\vec{\theta})\,, \\
    \Pi_{2}(\vec{\theta})&\equiv U(\vec{\theta})\dya{1,0}U^\dagger(\vec{\theta})\,, \\
    \Pi_{3}(\vec{\theta})&\equiv U(\vec{\theta})\dya{1,1}U^\dagger(\vec{\theta})\,,
\end{split}
\label{eq:ParameterizedProjectors3class}
\end{align}
where $\vec\theta$ is the 15-dimensional vector and 
\begin{align}
\begin{split}
U(\vec{\theta})\equiv &(e^{-i\theta_1Z}e^{-i\theta_2Y}e^{-i\theta_3X}\otimes e^{-i\theta_4Z}e^{-i\theta_5Y}e^{-i\theta_6X})\\
\times&e^{-i(\theta_7ZZ+\theta_8YY+\theta_9XX)}\\
\times&(e^{-i\theta_{10}Z}e^{-i\theta_{11}Y}e^{-i\theta_{12}X}\otimes e^{-i\theta_{13}Z}e^{-i\theta_{14}Y}e^{-i\theta_{15}X})\,.
\end{split}
\end{align}

\subsection{Simulation results}

The inception scores for the 2-class and 3-class phase classifications are shown in Table~\ref{table:2-class-IS}. 
In the 2-class case, we calculate $\xi_c$ with the projection measurement on the $X$-axis and $Z$-axis, which correspond to $(\theta,\phi)=(\pi/2,0)$ and $(\theta,\phi)=(0,\pi)$ in Eq.~\eqref{eq:ParameterizedProjectors}, respectively. 
{Also, we calculate $\xi_c$ with the high-accuracy-axis; that is, $(\theta,\phi)$ are chosen so that the classification accuracy, which is the ratio of the number of matches between the predicted labels and the correct labels to that of all data $\{x_i\}_{i=0}^{4095}$, is almost maximized.} 
Moreover, we calculate $\xi_c$ with the optimized-axis, meaning that we choose $(\theta,\phi)$ that maximizes cIS. 
In the 3-class case, $\xi_c$ with the $XX$-axis and the $ZZ$-axis are calculated by the projectors $\{\dya{+,+},\dya{+,-},\dya{-,+},\dya{-,-}\}$ and $\{\dya{0,0},\dya{0,1},\dya{1,0},\dya{1,1}\}$, respectively. 
For the high-accuracy-axis, we choose $\vec{\theta}$ in Eq.~\eqref{eq:ParameterizedProjectors3class} so that the classification accuracy is almost maximized. 
For the optimized-axis, we choose $\vec{\theta}$ that maximizes cIS.
{As expected, the qIS of the unbiased case (more diverse case) is larger than that of the biased case; interestingly, the bias-unbias gap for the 3-class case is bigger than that for the 2-class case. 
Also, the importance of appropriate choice of the measurement is clearly seen; in the 2-class classification for the unbiased case, the normalized error of $\xi_q-\xi_c$ is $(1.123-1.111)/2=6.0\times10^{-3}$, where $\xi_q\leq \ell=2$ is used; also for the 3-class case, the normalized error is $(1.553-1.534)/3\approx6.3\times 10^{-3}$. 
}

\begin{table}[htp!]
\begin{tabular}{|c||c|c|}
\hline
\multicolumn{3}{|c|}{(a) 2-class phase classification} \\
\hline
Inception Scores &  
Unbiased& 
Biased\\     
\hline
$\xi_q$ & 1.123 & 1.028 \\
\hline
$\xi_c$: $X$-axis & 1.095 & 1.018\\
\hline
$\xi_c$: $Z$-axis & 1.001 & 1.001\\
\hline
$\xi_c$: high-accuracy-axis & 1.022& 1.005\\
\hline
$\xi_c$: optimized-axis & 1.111 & 1.023\\
\hline
\multicolumn{3}{|c|}{(b) 3-class phase classification} \\
\hline
Inception Scores &  
Unbiased& 
Biased\\     
\hline
$\xi_q$ & 1.553 & 1.126 \\
\hline
$\xi_c$: $XX$-axis & 1.501 & 1.110 \\
\hline
$\xi_c$: $ZZ$-axis & 1.017 & 1.006\\
\hline
$\xi_c$: high-accuracy-axis & 1.352& 1.073\\
\hline
$\xi_c$: optimized-axis & 1.534 & 1.115\\
\hline
\end{tabular}
\caption{Inception scores $\xi_q$ and $\xi_c$ for the (a) 2-class and (b) 3-class classification problems. 
cIS is calculated with several types of projection measurements. }
\label{table:2-class-IS}
\end{table}

{For the 2-class classification problem, we can see the effect of appropriate choice of the measurement axis, using the Bloch sphere representation of the states.} 
Figure~\ref{fig:bloch_hist} shows the plots of the output states of the QCNN in the Bloch sphere, in Figs.~\ref{fig:bloch_hist}(a1) and ~\ref{fig:bloch_hist}(a2) the unbiased and Figs.~\ref{fig:bloch_hist}(b1) and ~\ref{fig:bloch_hist}(b2) the biased cases. 
The purple and yellow plots are the output states corresponding to the labels $``0"$ and $``1,"$ respectively. 
{
The classification task is to design a measurement axis (a single line passing through the origin) for best separating the yellow and purple points. 
Clearly, a measurement axis on the $xy$ plane better works than the $z$ axis. 
Next, from the Figs.~\ref{fig:bloch_hist}(a1) and ~\ref{fig:bloch_hist}(b1), separating the unbiased dataset by a single line on the $xy$ plane seems harder than the case for the biased case. 
This can be seen in the histogram of the projection results of the QCNN outputs onto the $X$-axis, shown in Fig.~\ref{fig:bloch_hist}(a3) for the unbiased case and Fig.~\ref{fig:bloch_hist}(b3) the biased case, respectively; the horizontal and vertical axes are the expectation value $\ave{X}$ of the QCNN output and the number of data belonging to the bin, respectively. 
That is, in Fig.~\ref{fig:bloch_hist}(a3) there is an overlap between the two dataset, while in Fig.~\ref{fig:bloch_hist}(b3) the dataset are clearly separated, meaning that the accuracy in the biased case is higher than the unbiased case.  
However, due to the lack of diversity of the biased dataset, the resulting qIS for the unbiased case takes a higher value than that for the biased one.  
Apart from these observation, it is interesting that the dataset $\{ \rho_{\rm out}(x_i)\}$ construct a near-1D manifold while $\{\rho_{\rm out}(x_i)\}$ is 2D distributed in $(\mathbb{C}^2)^{\otimes 9}$; thus, the classifier (QCNN) $\Phi$ works so that the single line (i.e., the measurement axis) could best separate the two dataset. 
}

\begin{figure}[htp!]
\centering
\includegraphics[width=1\columnwidth]{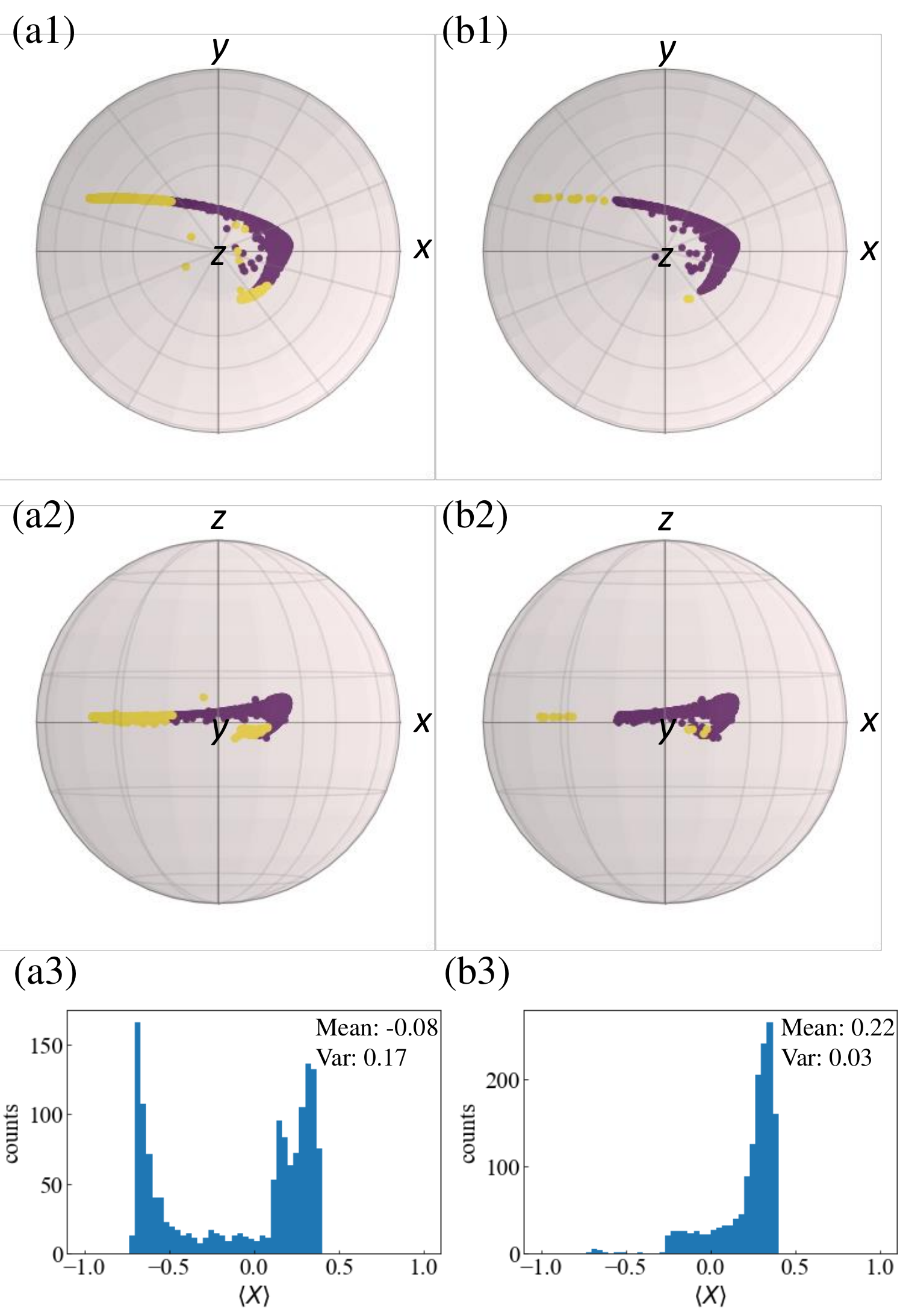}
\caption{The Bloch sphere representation and the histograms of the outputs of the QCNN; panels (a1, a2, a3) show the unbiased case and panels (b1, b2, b3) show the biased case of the 2-class phase classification problem.}
\label{fig:bloch_hist}
\end{figure}

{
Finally, Fig.~\ref{fig:2_class_phase_diagram} 
\begin{figure}[htp!]
\centering
\includegraphics[width=1\columnwidth]{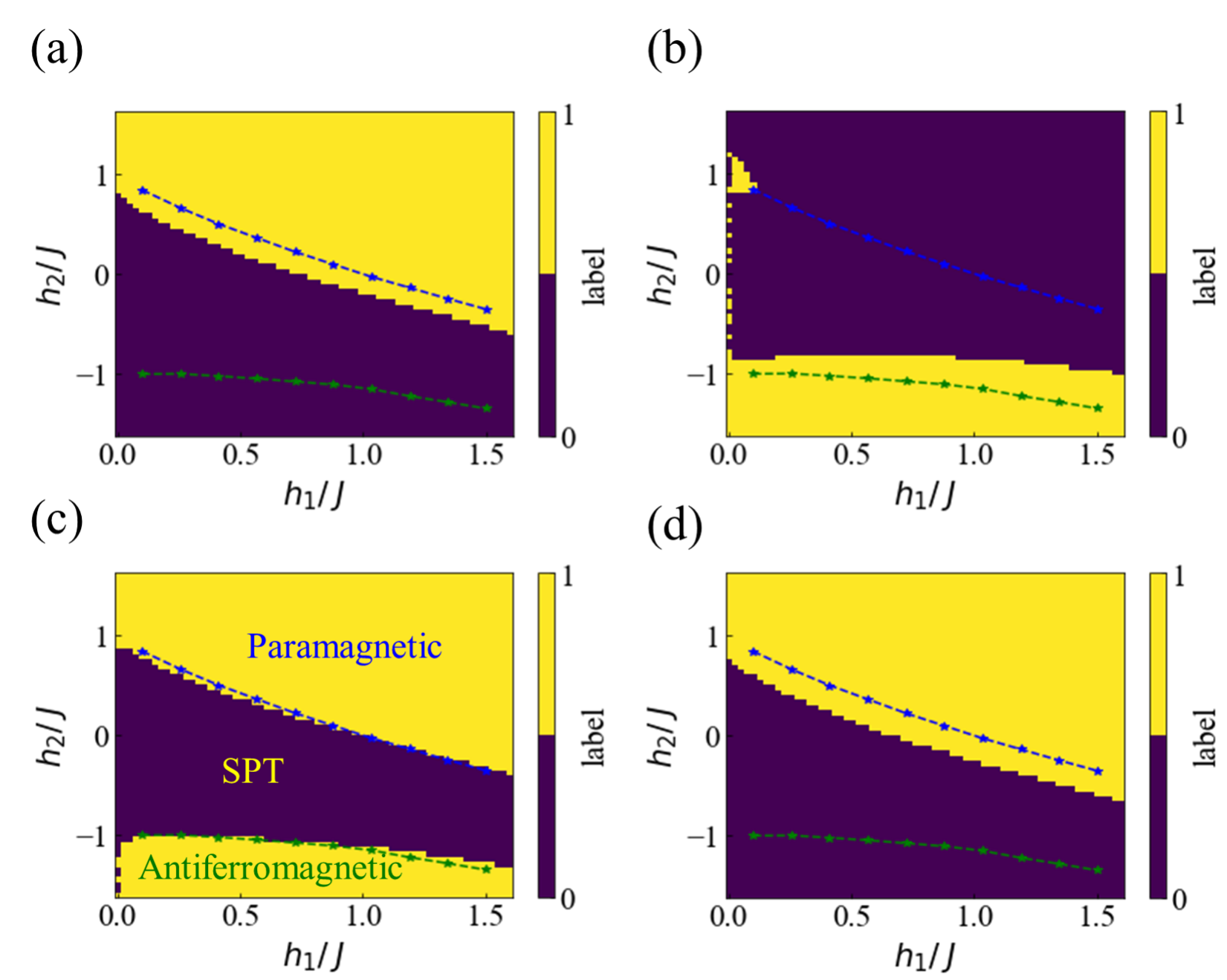}
\caption{Phase diagram predicted by the QCNN with the projection measurement onto the (a) $X$ axis, (b) $Z$ axis, (c) high-accuracy axis, and (d) optimized axis, for the 2-class classification problem. 
{
We use the axes (c) and (d) obtained in the unbiased case. 
}
}
\label{fig:2_class_phase_diagram}
\end{figure} 
shows the phase diagrams predicted by the trained QCNN with the projection measurement onto the (a) $X$ axis, (b) $Z$ axis, (c) high-accuracy axis, and (d) optimized axis, for the 2-class classification problem; recall that we determined the high-accuracy and optimized axis by appropriately choosing $(\theta,\phi)$ in Eq.~\eqref{eq:ParameterizedProjectors}. 
{
Regarding Fig.~\ref{fig:2_class_phase_diagram}(c) the high-accuracy axis and Fig.~\ref{fig:2_class_phase_diagram}(d) the optimized axis, we use those obtained in the unbiased case. 
}
The ground states of the Hamiltonian \eqref{eq:Hamiltonian} are classified to the paramagnetic phase (upper the blue line), the antiferromagnetic phase (below the green line), or the SPT phase between the two lines, where the blue and green lines are the exact phase boundaries obtained by using the infinite size DMRG numerical simulator. 
The purple and yellow regions correspond to the SPT phase with the label $``0"$ and the paramagnetic/antiferromagnetic phases with the label $``1,"$ respectively. 
That is, for the 2-class case, we do not distinguish the paramagnetic and antiferromagnetic phases. 
The value of classification accuracy achieved in each measurement methods are: (a) 0.78, (b) 0.49, (c) 0.98, and (d) 0.76. 
Interestingly, the optimal strategy in Fig.~\ref{fig:2_class_phase_diagram}(d) maximizes $\xi_c$ at the price of not detecting the antiferromagnetic phase, while the QCNN has the ability for detecting that phase as shown in Fig.~\ref{fig:2_class_phase_diagram}(c). 
This is consistent to the concept of IS, which quantifies the quality of the
generator, defined by the balance between the accuracy and diversity. 
}

{
Similarly, Fig.~\ref{fig:3_class_phase_diagram}
\begin{figure}[htp!]
\centering
\includegraphics[width=1\columnwidth]{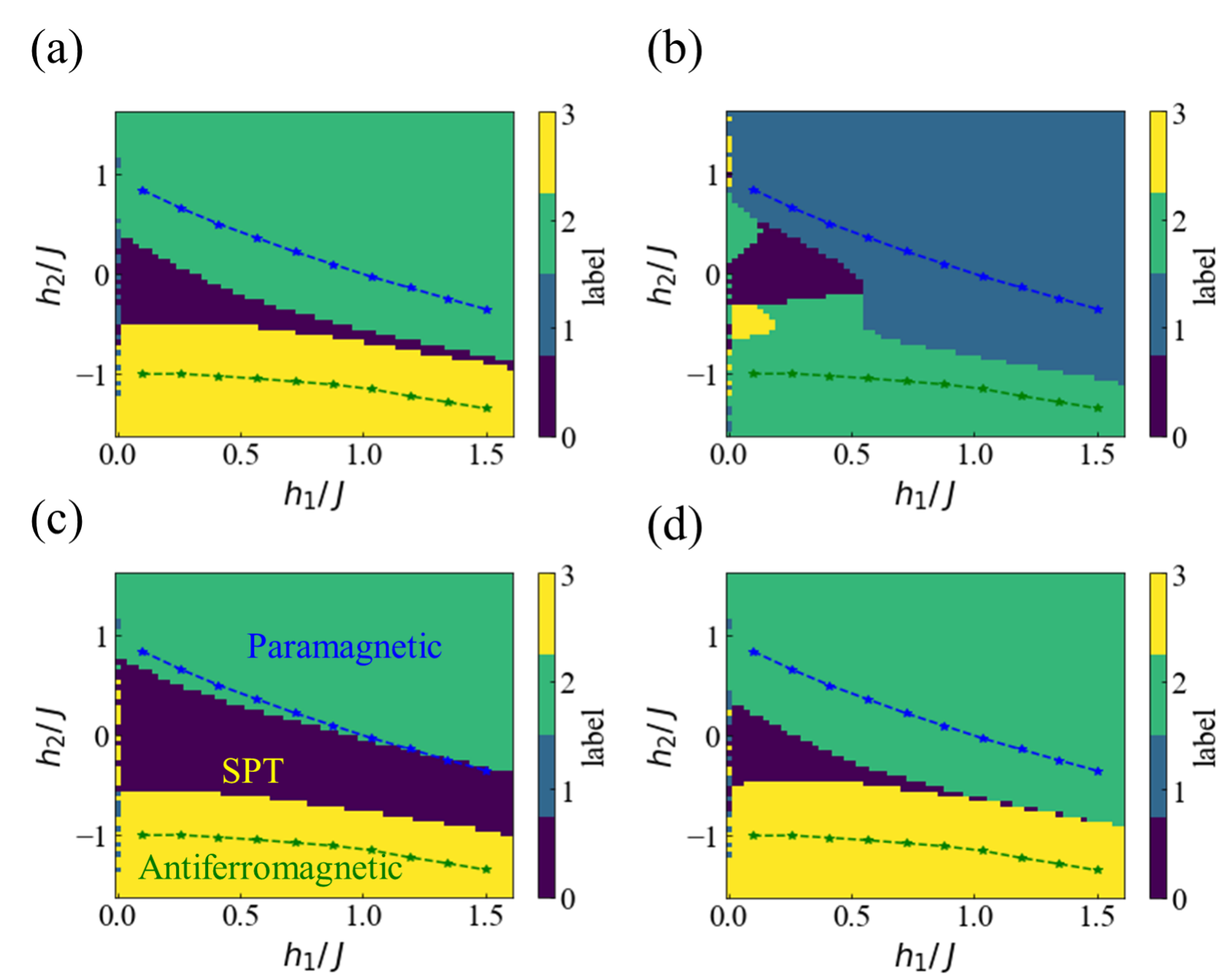}
\caption{
Phase diagram predicted by the QCNN with the projection measurement onto the (a) $XX$ axis, (b) $ZZ$ axis,  (c) high-accuracy axis, and (d) optimized axis, for the 3-class classification problem. 
{
We use the axes (c) and (d) obtained in the unbiased case. 
}
}
\label{fig:3_class_phase_diagram}
\end{figure} 
shows the phase diagrams predicted by the trained QCNN with the projection measurement onto the (a) $XX$ axis, (b) $ZZ$ axis, (c) high-accuracy axis, and (d) optimized axis, for the 3-class classification problem. 
The purple, blue, green, and yellow regions correspond to the phases of SPT, Nothing, Paramagnetic, Antiferromagnetic, with the labels $``0,"$ $``1,"$ $``2,"$ and $``3,"$ respectively. 
The value of classification accuracy in each measurement methods are (a) 0.70, (b) 0.05, (c) 0.87, and (d) 0.68. 
As in the 2-class case, the optimal measurement in Fig.~\ref{fig:3_class_phase_diagram}(d) achieves less accurate detection of the phase, compared to Fig.~\ref{fig:3_class_phase_diagram}(c) and even Fig.~\ref{fig:3_class_phase_diagram}(a). 
}

\section{Conclusion}
\label{sec:conclusion}

We have proposed the quantum inception score as a quality measure of quantum generative models, and obtained the following three main claims by connecting to the Holevo information. 
 {First, the performing measurements is one of the primary factors resulting in the quality 
difference of the quantum generative models in QQ protocol in Fig.~\ref{fig:summary} because of the destruction of the quantum coherence preserved in the quantum classifier.}
Second, the best quality can be further achieved with the entanglement of the generator's output {due to the potential superadditivity of the communication capacity. This can be regarded as the quantum advantage in the generative modeling by using the entanglement as a resource}. 
Third, the quality  {difference} of quantum generative models is due to the decoherence in the classifier, 
which can be quantified by the quantum efficacy emerging from the quantum fluctuation theorem. 
We also show examples of utilizing the quantum inception score to evaluate the quality of the 1D spin-$1/2$ chain as a generator, for the 2-class and 3-class classification of the quantum phase in the quantum many-body physics.

Finally we remark that, in the classical regime, {inception score} is not currently a widely used measure for assessing generative models because of its potential constraints in facilitating useful model comparison as pointed out in Ref.~\cite{barratt2018note}. 
These constraints could extend to the qIS in the similar manner. 
Nonetheless, the results obtained in this paper based on qIS enable us to emphasize 
the significance of further investigation of characterizing the quantum machine learning protocols from the fundamental perspectives, such as quantum information transmission and information thermodynamics.

\section*{Acknowledgement}

We are grateful to Ryuji Takagi, Haidong Yuan, Paola Cappellaro, Mingda Li, Sebastian Deffner, Marco Cerezo, Quntao Zhuang, Elton Yechao Zhu and Tharon Holdsworth for insightful and helpful discussions. A.S. is supported by the NSF under Grant No. MPS-2328774 {and Cooperative Agreement PHY-2019786}. N.Y. is supported by MEXT Quantum Leap Flagship Program Grants No. JPMXS0118067285 and No. JPMXS0120319794 and JST Grant No. JPMJPF2221.

\appendix 

\section{Derivation of Eq.~\eqref{eq:twirling}}
\label{app:twirling}
In the Appendix, we provide the proof details of Eq.~\eqref{eq:twirling}. We discuss two cases: nondegenerate and degenerate $H$. 

\subsection{Nondegenerate $H$}

First, let us focus on the case that $H$ is nondegenerate. $H$ can be diagonalized as 
\begin{align}
H = \sum_{n=1}^{D}\omega_n\dya{\omega_n}\,,
\label{eq:Hdiagonal}
\end{align}
where $D$ denotes the dimension of the Hilbert space and $\{\omega_n\}_{n=1}^{D}$ are all different from each other. Here, $\{\ket{\omega_n}\}_{n=1}^{D}$ is an orthonormal basis of $H$. Defining, 
\begin{align}
    \Delta\omega_{nm}\equiv\omega_{n}-\omega_{m},
\end{align}
we have 
\begin{align}
\begin{split}
    \frac{1}{2T}&\int_{-T}^{T}dte^{-iHt}\rho e^{iHt}\\
    =&\frac{1}{2T}\sum_{n,m}\int_{-T}^{T}dte^{-i\Delta\omega_{nm}t}\bramatket{\omega_n}{\rho}{\omega_m}\ketbra{\omega_n}{\omega_m}\\
    =&\sum_{n,m} \sinc(\Delta\omega_{nm}T)\bramatket{\omega_n}{\rho}{\omega_m}\ketbra{\omega_n}{\omega_m}\\
    =&\sum_{n=1}^{D}\bramatket{\omega_n}{\rho}{\omega_n}\dya{\omega_n}\\
    &+\sum_{n\neq m}\sinc(T\Delta\omega_{nm})\bramatket{\omega_n}{\rho}{\omega_m}\ketbra{\omega_n}{\omega_m}\,,
\end{split}
\end{align}
where we used $\sinc(0)=1$. Then, by utilizing $\lim_{T\to\infty} \sinc(T\Delta\omega_{nm})=0~(\Delta\omega_{nm}\neq 0)$, the second term vanishes when $T\to\infty$; therefore, we obtain 
\begin{align}
\begin{split}
    \mathcal{G}_H(\rho)&= \lim_{T\to\infty}\left[\frac{1}{2T}\int_{-T}^{T}dt e^{-it H}\rho e^{it H}\right]\\
    &=\sum_{n=1}^{D}\bramatket{\omega_n}{\rho}{\omega_n}\dya{\omega_n}=\sum_{n=1}^{D}\Pi_n\rho\Pi_n\,,
\end{split}
\label{eq:nondegenerate}
\end{align}
where $\Pi_n\equiv\dya{\omega_n}$ is a rank-1 projector onto the eigenstate $\ket{\omega_n}$ of $H$. 
This implies that $\mathcal{G}_H$ corresponds to the dephasing map transferring $\rho$ into a fully incoherent state~\cite{baumgratz2014quantifying} diagonal in the eigenbasis $\{\ket{\omega_n}\}_{n=1}^{D}$ of $H$. Then, the relative entropy of asymmetry $A(\rho;H)\equiv S\left(\rho\,\|\,\mathcal{G}_H(\rho)\right)$ coincides with the relative entropy of coherence~\cite{baumgratz2014quantifying}.

\subsection{Degenerate $H$}
Next, let us discuss the case that $H$ is degenerate. Again, $H$ can be diagonalized as Eq.~\eqref{eq:Hdiagonal}. Suppose that $H$ has $\lambda$ degenerate eigenvalues 
\begin{align}
\omega_{\alpha_1},\omega_{\alpha_2},\cdots,\omega_{\alpha_{\lambda}}\,.
\end{align}
For each degenerate eigenvalue $\omega_{\alpha_{\mu}}$, suppose that we have $k_{\mu}$ orthonormal eigenbasis 
\begin{align}
\{\ket{\omega_{\alpha_{\mu}}^{(1)}},\ket{\omega_{\alpha_{\mu}}^{(2)}},\cdots,\ket{\omega_{\alpha_{\mu}}^{(k_{\mu})}}\}\,.
\end{align}
Defining 
\begin{align}
\overline{K}\equiv\sum_{\mu=1}^{\lambda}k_{\mu}\,,
\end{align}
the number of distinct nondegenerate eigenvalues is given by 
\begin{align}
K=D-\overline{K}\,.
\end{align}
Then, $L=K+\lambda <D$ is the total number of distinct eigenvalues of $H$, and the Hilbert space can be written as 
\begin{align}
    \mathcal{H} = \bigoplus_{\alpha=1}^{L}\mathcal{H}_{\alpha}\,,
\end{align}
where $\mathcal{H}_{\alpha}$ denotes the subspace spanned by the eigenstates of $H$ corresponding to the eigenvalue $\omega_{\alpha}$. Here, for the nondegenerate eigenvalue $\omega_{\alpha}$, the corresponding projector is the rank-1 projector 
\begin{align}
\Pi_{\omega_{\alpha}}\equiv\dya{\omega_{\alpha}}\,.
\end{align}
For the degenerate eigenvalue $\omega_{\alpha_{\mu}}$, the projector onto the corresponding subspace of the eigenbasis is 
\begin{align}
\Pi_{\alpha_{\mu}}=\sum_{\nu=1}^{k_{\mu}}\dya{\omega_{\alpha_{\mu}}^{(\nu)}}\,,
\end{align}
so that the rank of the projector is $\rank(\Pi_{\alpha_{\mu}})=k_{\mu}>1$. 
Now, let us define the set of the nondegenerate eigenvalues as 
\begin{align}
\Omega\equiv\{\omega_{\alpha}\,|\,\omega_{\alpha}\neq\omega_{\beta},~\alpha\neq\beta\}
\end{align}
and the set of degenerate eigenvalues as
\begin{align}
    \Gamma \equiv \{\omega_{\alpha}\,|\, \omega_{\alpha}=\omega_{\beta}, \alpha\neq\beta\}\,. 
\end{align}
Then, we have 
\begin{align}
\begin{split}
    \frac{1}{2T}&\int_{-T}^{T}dte^{-iHt}\rho e^{iHt}\\
    =&\sum_{n,m} \sinc(\Delta\omega_{nm}T)\bramatket{\omega_n}{\rho}{\omega_m}\ketbra{\omega_n}{\omega_m}\\
     =&\sum_{\omega_{\alpha}\in\Omega}\bramatket{\omega_{\alpha}}{\rho}{\omega_{\alpha}}\dya{\omega_{\alpha}}
     +\sum_{\omega_{\alpha},\omega_{\beta}\in\Gamma}\bramatket{\omega_{\alpha}}{\rho}{\omega_{\beta}}\ketbra{\omega_{\alpha}}{\omega_{\beta}}\\
     &+\sum_{\omega_{\alpha},\omega_{\beta}\in\Omega}\sinc(T\Delta\omega_{\alpha\beta})\bramatket{\omega_{\alpha}}{\rho}{\omega_{\beta}}\ketbra{\omega_{\alpha}}{\omega_{\beta}}\\
     =&\sum_{\omega_{\alpha}\in\Omega}\bramatket{\omega_{\alpha}}{\rho}{\omega_{\alpha}}\dya{\omega_{\alpha}}\\
     &+\sum_{\mu=1}^{\lambda}\sum_{\nu=1}^{k_{\mu}}\sum_{\nu'=1}^{k_{\mu}}\bramatket{\omega_{\alpha_{\mu}}^{(\nu)}}{\rho}{\omega_{\alpha_{\mu}}^{(\nu')}}\ketbra{\omega_{\alpha_{\mu}}^{(\nu)}}{\omega_{\alpha_{\mu}}^{(\nu')}}\\
     &+\sum_{\omega_{\alpha},\omega_{\beta}\in\Omega}\sinc(T\Delta\omega_{\alpha\beta})\bramatket{\omega_{\alpha}}{\rho}{\omega_{\beta}}\ketbra{\omega_{\alpha}}{\omega_{\beta}}\\
     =&\sum_{\omega_{\alpha}\in\Omega}\Pi_{\omega_{\alpha}}\rho\Pi_{\omega_{\alpha}}+\sum_{\mu=1}^{\lambda}\Pi_{\alpha_{\mu}}\rho\Pi_{\alpha_{\mu}}\\
     &+\sum_{\omega_{\alpha},\omega_{\beta}\in\Omega}\sinc(T\Delta\omega_{\alpha\beta})\bramatket{\omega_{\alpha}}{\rho}{\omega_{\beta}}\ketbra{\omega_{\alpha}}{\omega_{\beta}}\,,
\end{split}
\end{align}
where the third term vanishes as $T\to\infty$. 
Therefore, defining the set of projectors 
\begin{align}
\{\Pi_{n}\}_{n=1}^{L}\in\{\Pi_{\omega_{\alpha}}\}_{\omega_{\alpha}\in\Omega}\cup\{\Pi_{\alpha_{\mu}}\}_{\mu=1}^{\lambda}
\end{align}
onto the subspace of the eigenbasis of $H$, we obtain   
\begin{align}
\mathcal{G}_H(\rho)= \lim_{T\to\infty}\left[\frac{1}{2T}\int_{-T}^{T}dt e^{-it H}\rho e^{it H}\right]=\sum_{n=1}^{L}\Pi_{n}\rho \Pi_n\,,
\label{eq:degenerate}
\end{align}
which is a block diagonal state. In this case, the relative entropy of asymmetry $A(\rho;H)\equiv S\left(\rho\,\|\,\mathcal{G}_H(\rho)\right)$ coincides with the relative entropy of superposition~\cite{aberg2006quantifying}. Obviously, Eq.~\eqref{eq:degenerate} is reduced to Eq.~\eqref{eq:nondegenerate} when we have $\rank(\Pi_{\alpha_{\mu}})=1~(\forall \mu)$, which corresponds to the case that $H$ is nondegenerate.

\bibliography{ref.bib}


\end{document}